\DeclareMathOperator*{\argmax}{arg\,max}
\newcommand{\Bayes}{\hat{\theta}^{\text{(Bayes)}}}
\newcommand{\ncm}{\newcommand}
\ncm{\beq}{\begin{equation*}}
\ncm{\eeq}{\end{equation*}}
\ncm{\cD}{{\cal D}}
\ncm{\cDnew}{{\cal D}^{\text{new}}}
\ncm{\cDobs}{{\cal D}^{\text{obs}}}
\ncm{\cDsim}{{\cal D}^{\text{sim}}}
\ncm{\hI}{\hat{I}}
\ncm{\hP}{\hat{P}}
\ncm{\Ipr}{I^\prime}
\ncm{\Om}{\Omega}
\ncm{\om}{\omega}
\ncm{\hatom}{\hat{\omega}}
\ncm{\cL}{{\cal L}}
\ncm{\Prob}{\mathbb{P}}
\ncm{\E}{\mathbb{E}}
\ncm{\I}{\mathbbm{1}}
\newcommand{\mY}{\mathbf{Y}}
\newcommand{\mX}{\mathbf{X}}
\newcommand{\mB}{\mathbf{B}}
\newcommand{\mE}{\mathbf{E}}
\newcommand{\mZ}{\mathbf{Z}}
\newcommand{\mU}{\mathbf{U}}
\newcommand{\mC}{\mathbf{C}}
\newcommand{\mD}{\mathbf{D}}
\newcommand{\mV}{\mathbf{V}}
\newcommand{\mS}{\mathbf{S}}
\newcommand{\mI}{\mathbf{I}}
\newcommand{\Sigmajuv}{\bSigma^\texttt{juv}}
\newcommand{\Sigmaad}{\bSigma^\texttt{ad}}
\ncm{\cX}{\mathcal{X}}
\newcommand{\classI}{\hat{\mathrm{I}}}
\newcommand{\rmI}{\mathrm{I}}
\newcommand{\cI}{\mathcal{I}}
\newcommand{\sfN}{\mathsf{N}}
\newcommand{\cN}{\mathcal{N}}
\newcommand{\sfS}{\mathsf{S}}
\newcommand{\sfK}{\mathsf{K}}
\newcommand{\rmd}{\mathrm{d}}
\newcommand{\bSigma}{\mathbf{\Sigma}}
\newcommand{\wl}{\textit{wing length }}
\newcommand{\nl}{\textit{notch length }}
\newcommand{\np}{\textit{notch position }}
\newtheorem{prop}{Proposition}
\newtheorem{lemma}{Lemma}
\begin{document}

\title{Identification of taxon through classification with partial reject options}

%
%
%

\author{M{\aa}ns Karlsson\thanks{Department of Mathematics, Stockholm University, 106 91 Stockholm, Sweden}
\\
Ola H\"{o}ssjer\thanks{Department of Mathematics, Stockholm University, 106 91 Stockholm, Sweden}}
\maketitle

\begin{abstract}
Identification of taxa can significantly be assisted by statistical classification based on trait measurements in two major ways; either individually or by phylogenetic (clustering) methods. In this paper we present a general Bayesian approach for classifying species individually based on measurements of a mixture of continuous and ordinal traits as well as any type of covariates. It is assumed that the trait vector is derived from a latent variable with a multivariate Gaussian distribution. Decision rules based on supervised learning are presented that estimate model parameters through blockwise Gibbs sampling. These decision regions allow for uncertainty (partial rejection), so that not necessarily one specific category (taxon) is output when new subjects are classified, but rather a set of categories including the most probable taxa. This type of discriminant analysis employs reward functions with a set-valued input argument, so that an optimal Bayes classifier can be defined. We also present a way of safeguarding against outlying new observations, using an analogue of a $p$-value within our Bayesian setting. Our method is illustrated on an original ornithological data set of birds. We also incorporate model selection through cross-validation, examplified on another original data set of birds.
\end{abstract}

%

\section{Introduction}

\subsection{Biological taxonomy}

Biological taxonomy relies on correct classification of organisms to taxa, which in turn requires that reliable and detailed data is available to learn the decision rules. The use of traits to numerically classify taxa has a long history, cf.\ 
\citet{SnSo73, Ba79, PaPr80, Fe83} 
for a summary of early work. These traits are either morphological or genetic, and measured on a scale that is either continuous, ordinal or categorical, or a mixture of these. In order to classify new organisms, some kind of reference material (or a training data set) is needed, to which the trait vector of the new subject is compared and matched. This requires that supervised learning or discriminant analysis has been applied to training data in order to generate a decision rule, so that the new organisms can be assigned to the taxon whose decision region includes this subject's trait vector. Such a classifier is either phylogenetic or individual, depending on whether genetic relationships between species are taken into account or not when the decision regions are derived. For instance, in the context of bacterial classification, \citet{Wa07} introduce an individual (naive Bayesian) classifier, whereas \citet{Ga17} define a phylogenetic classifier such that when multiple taxa of the reference material match the trait vector of a new observation, information from a phylogenetic tree \citep{HuRo01} is used in order to facilitate classification.

Population ecology and ornithology in particular are important applications of biological taxonomy. To the authors' knowledge, there has been no published interpretable and general approach to classifying ornithological data. The widely used work of \citet{svensson1992identification} in bird species identification makes rich use of numerical combinations (often refered to as ``wing formulae'') of trait observations to classify birds to taxa, but offers no general solution. Modern approaches to bird species classification include the Merlin Bird ID software, by the Cornell Lab of Ornithology\footnote{See \href{merlin.allaboutbirds.org}{merlin.allaboutbirds.org}.}. Merlin Bird ID uses convolutional neural networks for image recognition \citep{lecun1989backpropagation} and gives the user a ranked list of species to choose from. Due to the machine learning approach parameter interpretability is complicated though, and not accessible to the user.

\subsection{Unsupervised learning}

A number of unsupervised leaning methods methods have been proposed, as summarized for instance in 
\citet{ripley1996pattern, Ha09, Li11}. For continuous traits this includes linear (or Fisher) discriminant analysis \citep[LDA,][]{Fi36}, quadratic discriminant analysis \citep[QDA,][]{smith1946some, bensmail1996regularized, fraley2002model}, heteroscedastic discriminant analysis \citep{KuAn98}, regularized discriminant analysis \citep{Fr89}, mixture discriminant analysis \citep{HaTi96}, and probabilistic Fisher discriminant analysis \citep{BoBr12}. These are all soft discrimination methods \citep{Wa98}, where a joint statistical distribution of class membership (taxon) and trait is assumed, and the class with maximal posterior probability is selected. Bayes' rule is used to compute the posterior probabilities in terms of prior probabilities of classes and the likelihoods of trait vectors conditional on classes. In particular, a naive Bayes classifier corresponds to a model for which the trait vectors are conditionally independent conditional on class \citep{DoPa97}. Logistic regression and penalized logistic regression \citep{Li00} can also be used for soft binary classification by specifying the posterior probabilities directly in terms of a regression model, with traits as covariates. In particular, this makes it easy to incorporate continuous as well as categorical traits. For any of the abovementioned soft classifiers, if the parameters of the likelihood are random as well, the resulting method is an instance of predictive classification or Bayesian discriminant analysis \citep[Section 2.4 of][]{ripley1996pattern, aitchison1980statistical, geisser1993predictive, hjort1986notes}. This includes, for instance, the Bayesian version of QDA  \citep{Geisser1964, fraley2007bayesian}. On the other hand, many machine learning methods, such as support vector machines 
\citep{CoVa95}, distance-weighted discrimination \citep{Ma07}, and boosting \citep{Fr00}, 
require no statistical model at all, since the decision regions of the classifier are defined in terms of an algorithm that does not rely on any model for how data was generated.             

As mentioned above, it is often the case that taxonomic data is a mixture of various types of observations (continuous, integer valued, ordered categorical). For instance, there might be partial observations (due to e.g. rounding or grouping) or some components of the trait vector might be missing. Models for mixed type of data are important for a wide range of applications \citep{de2013analysis}. Here we will use the umbrella term \emph{obfuscated observations} for partial and missing observations among the components of the trait vector. A general way to handle obfuscated data is to define a distance between pairs of trait vectors, and then employ a nearest neighbour method for classfication \citep{Go71,GoLe86,KaRo90,Do21}. Another option for mixed data is to use a model-based soft classifier, with latent multivariate Gaussian variables, whose components are suiably obfuscated in order to generate the observed trait vector. These types of models have been used in Item Response Theory (IRT), originally proposed by \citet{thurstone1925method} in the context of ecudational testing. The theory has since then been developed by many authors \citep{lord1952relation, rasch1993probabilistic, vermunt2001use, lord2008statistical}. More recent Bayesian models of categorical data include \citet{Albert1993, fokoue2003mixtures, chgh05, mcparland2014clustering, mcparland2017clustering}. Another Bayesian soft classifier for ordinal data, a mixed membership model with a gamma distributed hidden variable, has been proposed by \citet{ViGi15}. 

It is often the case that two or more taxa fit the new observation almost equally well. It is possible then to include a reject option of the classifier that corresponds to ``don't know''. Binary classfication with reject options has been studied by \citet{chow1970optimum} and \citet{herbei2006classification} (see also \citet{freund2004generalization}), whereas \citet{ripley1996pattern} adds a reject option for discrimination between an arbitrary number of classes. In taxonomy it is also important to provide some kind of cutoff when none of the candidate taxa fit the trait vector of the new organism well. This is achieved, for model-based soft classifiers, by declaring the newly observed trait vector as an outlier when its $p$-value is smaller than a cutoff. This $p$-value is based on a measure of oytlyingness of the observed new trait vector, using the mixture distribution of traits from training data of all classes as the analogue of a null distribution \citep{ripley1996pattern}.   

\subsection{The present article}

The classification method presented in this paper is a generalization of the approaches suggested in \citet{svensson1992identification} and \citet{Malmhagen2013}. We develop a method that makes it possible to conclude which trait measurements that characterize various taxa, as an alternative to neural networks and other black box methods. Our method can be viewed as an extension of Bayesian QDA that incorporates traits of mixed types, which are components of a hidden Gaussian vector that have been obfuscated in order to obtain the observed trait vector.  In particular we draw on work by \citet{Albert1993, fokoue2003mixtures, mcparland2014clustering, mcparland2017clustering} and defined a Bayesian classifier whose parameters are estimated from training data by means of blockwise Gibbs sampling \citep{geman1984stochastic, robert2013monte}. The inclusion of covariates is mentioned as an extension in the discussion sections of  \citet{mcparland2014clustering, mcparland2017clustering}. Here we contribute to this extension, assuming that the location as well as the scale parameter of the latent Gaussian distribution depend on covariates, so that heteroscedasticity is achieved. We believe this covariate extension is important, since taxonomic trait measurments of population ecology data is usually informative conditional on other information. In other words, the same value of a trait measurement might indicate different taxa for a subject, depending on the particular subject's additional information. In order to control for this additional information, we include it as set of covariates, and construct our classifier conditional on the covariate information for each subject.

A major novelty of our work is the use of set-valued outputs of the classifier. A set of size 0 corresponds to a scenario where no taxon fits the new organism (an outlier), a set of size 1 corresponds the case where one taxon fits the new organism better than all the others, whereas a set of size $N$, where $N$ is the number of classes, corresponds to a reject option. We also allow for decision regions with $2,\ldots,N-1$ species, corresponding to a partial reject option, where no single species is chosen, but some are excluded. In this way we generalize work by \citet{chow1970optimum, ripley1996pattern, freund2004generalization, herbei2006classification}, making use of a reward function with a set-valued input argument, so that an associated set-valued Bayesian classifier is defined. In order to handle outlying trait vectors, in contrast to \citet{ripley1996pattern}, a new organism's trait vector is assigned one $p$-value for each taxon, rather than one single $p$-value for the whole training data set.  

The primary scenario for which we envision use of our proposed method is as follows: A researcher has a desire to construct a procedure to classify subjects into a, usually small, number of taxa. The procedure should not require any advanced or expensive technology, but rather be viable using only visual observations or simple measurements from everyday tools, such as rulers, callipers and scales. Such a procedure has the potential to aid data collection for population ecologists, if high requirements on classification correctness are met. It is assumed that the researcher has access to training data where each subject has a know taxon, and the traits shared by the taxa are observed for the subjects, although not neccessarily perfectly. To achieve this goal, the researcher would apply our proposed methodology and, if feasible, derive a comprehensive list of which trait observations that signify the various taxa, and which are ambiguous. In short, the goal is to specify a decision rule that is not black box, which will be feasible when the number of traits is small.

The secondary scenario for using our methodology is that in which an agent desires to classify an organism to a taxon, but is unable to do so without some kind of aid. This could be due to inexperience of the considered taxa, or because the number of traits and the number of different outcomes of the traits are too large to extract a decision rule that is nimble. In such a situation, our method can provide tailored prediction of a taxon given the observed traits.



Our paper is organized as follows. In Section \ref{case} we present the application with original ornithological data, which will be employed as we present and analyze our method of classification. The core models, multivariate and multiple  regression with Gaussian errors for perfectly observed data, and the corresponding latent Gaussian models for obfuscated data, are presented in Section \ref{modelformulation} in general but concise terms, and specified for our type of data. Section \ref{classification} presents soft (posterior probability) classification in conjunction with reward functions with set-valued input arguments. In Section \ref{modelselection} we propose model selection (both for traits and covariates) through cross-validation and apply it on another original data set. A comparison with \texttt{mclust} \citep{mclust}, the most popular \textsf{R}-package for clustering and classification using discriminant analysis, is made in Section \ref{comparison}. Section \ref{discussion} contains the discussion. Most of the mathematical details are placed in the appendices, and we will refer to them accordingly.

As a final note, we stress that the presented approach in no way is limited to population ecology data and classification of subjects to taxa. It is applicable more generally for classification with partial reject options, of traits vectors whose components might be a mixture of different types.


\section{Bird species classification} \label{case}
In order to showcase in detail the use of our method, we will present an example of bird species classification using an original data set 
A minor subset of this primary data set was used in \citet{Walinder1988} and \citet{Malmhagen2013}, but the data set as a whole is unpublished. We will also make use of a second original, unpublished data set in Section \ref{modelselection}, the details of which we present in that section. Both data sets were collected by the Falsterbo Bird Observatory.

The primary data set concerns four morphologically similar warblers in the \textit{Acrocephalus} genus; Eurasian Reed Warbler (\textit{Acrocephalus scirpaceus}), Marsh Warbler (\textit{Acrocephalus palustris}), Blyth's Reed Warbler (\textit{Acrocephalus dumetorum}) and Paddyfield Warbler (\textit{Acrocephalus agricola}). They constitute a typical example of species that experts with long experience can classify through visual observation, whereas  those with less experience usually need measurements of particular traits to aid their classification. Our material contains measurements of these traits for birds that have been identified by experts.

Data was collected from wild birds captured and ringed in Falsterbo, Sweden, and from museum specimens at the museums of natural history in Stockholm, Copenhagen, and Tring. All birds have three traits of main concern, the \textit{wing length} \citep[measured as described in][]{svensson1992identification}, the \textit{notch length} of feather P2 \citep[measured as described in Figure 1 of][]{Malmhagen2013} and the relative position of the P2 notch to the rest of the wing, referred to as the \textit{notch position} \citep[taken as described in][]{svensson1992identification}. These are all measurements of different parts of the wing. The only covariate is \textit{age}, and it is binary, with levels \textit{juvenile} and \textit{adult}. \textit{Age} is included as a covariate due to a suspected change in the trait distribution (both location and scale) between the birds' juvenile and adult plumages.

Ideally, \wl and \nl are measured continuously, but in reality this is impossible. Instead, \wl is rounded to integer millimeters, as this is the typical unit that gives consistent measurements. By the same rationale \nl is rounded to half millimeters. Finally, \np is by definition ordered categorical. It measures where in relation to the other wing feathers the notch is positioned and it is defined as the feather closest to the notch.

Overall we have 54 155 observed birds, and these constitute the referential data for this classification problem. These are distributed as presented in Table \ref{acros}. The uneven distribution of observations across taxa will be commented on throughout the analysis of the data set.


\begin{table}[ht]
\centering
\caption{The distribution of the \textit{Acrocephalus} data over species, covariates and traits. Note that the \textit{Paddyfield Warbler} has few observations overall, but almost all of them are complete, whereas the \textit{Reed Warbler} has many observations but only a small percentage of them are complete.} 
{\small
\begin{tabular}{cr|l|l|l|l}
& & \multicolumn{4}{c}{Species}\\
\cline{3-6}
Trait & Covariate & Reed W. & Blyth's Reed W. & Paddyfield W. & Marsh W. \\ 
  \hline
Wing  & \textit{Juvenile} & 36 023 &  41 &  19 & 2 534 \\
length & \textit{Adult} & 14 536 &  68 &  12 & 870 \\
\hline 
Notch  & \textit{Juvenile} & 830 &  41 &  18 & 2 212 \\
length & \textit{Adult} & 472 &  68 &  12 & 579 \\
\hline 
Notch & \textit{Juvenile} & 410 &  41 &  19 & 416 \\
Position & \textit{Adult} & 77 &  68 &  12 & 40
\end{tabular}
}
\label{acros}
\end{table}




\newpage
\section{Model formulation} \label{modelformulation}
We will first present the model for the case where all trait measurements are continuous and perfectly observed and thereafter for the case where the vector of trait measurements are obfuscated in different ways, including missing values.

\subsection{Ideal case; no obfuscated trait measurements} \label{idealmod}
Suppose we have $N$ different categories (or classes), contained in the set $\sfN = \{1,\ldots,N\}$, 
with known prior probabilities $\pi = (\pi_1,\ldots,\pi_N)$. With full data we measure $q$ 
traits and $p$ covariates of each subject. Let $Y_{ijk}$ be the measurement of trait 
$k$ for subject $j$ in category $i$, where $1\leq i\leq N$, $1\leq j\leq n_i$, $1\leq k\leq q$ 
and $n_i$ is the number of subjects in category $i$. We assume that 
\begin{equation} \label{traits}
Y_{ij} = (Y_{ij1},\ldots,Y_{ijq}) \sim \text{N}\left(m_{ij}, \bSigma_{ij}\right)
\end{equation}
are independent random vectors, each one having a multivariate normal distribution, with
\begin{equation*}
m_{ij} = (m_{ij1},\ldots,m_{ijq}) \qquad \text{and} \qquad \bSigma_{ij} = \left(\Sigma_{ijkl}\right)_{k,l=1}^q
\end{equation*}
being the mean vector and the covariance matrix of subject $j$ of category $i$. The Gaussian assumption on $Y_{ij}$ is especially suitable when the inheritable components of the traits are known to be or can be assumed to be of polygenic nature and the environmental influence on the traits consists of many factors, each with a small effect. This is commonly the case for traits such as height or color of body parts \citep{LyWa98, lande2003stochastic}. Let also
\begin{equation*}
x_{ij} = \left(1, x_{ij1},\ldots,x_{ijp}\right) = \left(x_{ijm}\right)_{m=0}^p
\end{equation*}
be the covariate vector of subject $j$ of category $i$. Trait vectors and covariate 
vectors of category $i$ are rows in the matrices $\mY_i = \left(Y_{i1}^\top,\ldots,Y_{in_i}^\top\right)^\top$ 
and $\mX_i= \left(x_{i1}^\top,\ldots,x_{in_i}^\top\right)^\top$  respectively, where $\top$ refers to matrix transposition.  
We now proceed by formulating a multivariate and multiple regression model 
\begin{equation} \label{matmod}
\mY_i = \mX_i\mB_i + \mE_i
\end{equation}
for category $i$, where $\mB_i=\left(B_{imk}; m=0,\ldots,p; k=1,\ldots,q\right)$ is 
the regression parameter matrix, whose first row consists of intercepts for 
the $q$ traits, $m_{ij}$ is the $j^{\text{th}}$ row of $\mX_i\mB_i$, and 
$\mE_i = \left(E_{i1}^\top,\ldots,E_{in_i}^\top\right)^\top$ is an error 
term matrix with independent rows $E_{ij} \sim\text{N}(0,\mathbf{\bSigma}_{ij})$. 

For use in the construction of a joint prior, and later the derivation 
of the marginal posterior distributions of the parameters, the vectorized 
form of our regression model is needed. Denote by $\mbox{vec}(\cdot)$ the vectorization operation of a matrix, that appends the columns of the matrix from left to right on top of each other \citep{macedo2013typing}. The inverse of the vec operation will also be used, and it is denoted by $\mbox{vec}^{-1}(\cdot)$. Then rewrite \eqref{matmod} as 
\begin{equation} \label{U}
\mU_i = \text{vec}(\mY_i) = \mZ_i \beta_i + \text{vec}(\mE_i)
\end{equation}
with $\beta_i = \text{vec}(\mB_i)$. Denoting an identity matrix of rank $q$ 
with $\mI_q$ and using the matrix tensor (or Kronecker) product $\otimes$,
\begin{equation} \label{Z}
\mZ_i =  \mI_q \otimes \mX_i = \begin{pmatrix} \mX_i & 0 & \cdots & 0 \\
                        0 & \mX_i & \ddots & \vdots \\
                        \vdots & \ddots & \ddots & 0 \\
                        0 & \cdots & 0 & \mX_i \end{pmatrix}
\end{equation}
is a block-diagonal matrix with $q$ blocks along the diagonal.

Now suppose we have $A$ covariance classes $\alpha = 1,\ldots,A$ 
for category $i$ such that
\begin{equation} \label{covmats}
\bSigma_{ij} = \bSigma^\alpha_i \quad \text{if } x_{ij} \in \cX^\alpha,
\end{equation}
where $\cX = \cX^1 \cup \ldots \cup \cX^A$ is a disjoint 
decomposition of the predictor space $\cX$. Assume a prior 
$\text{N}\left(\left(b_{i0k},\ldots,b_{ipk}\right)^\top = b_{ik}, \bSigma_{\mB_i}\right)$ on each of the columns $k=1,\ldots,q$ of $\mB_i$. This implies a prior 
$\text{N}\left(\left(b_{i1}^\top,\ldots,b_{iq}^\top\right)^\top = \beta_{i0}, \mI_q \otimes \bSigma_{\mB_i} = \bSigma_{\beta_i}\right)$ 
on $\beta_i$. 
Further, assuming prior independence and imposing an Inverse-Wishart 
distribution $\bSigma_i^\alpha \sim IW(\nu_0, \mathbf{V}_0)$ (see for instance \citet{gelman2013bayesian}) on the covariance matrices in \eqref{covmats} for $\alpha = 1,\ldots,A$, we 
get the joint prior
\begin{equation} \label{jointprior}
p(\beta_i, \bSigma_i^1, \ldots, \bSigma_i^A) = p(\beta_i)\prod_{\alpha=1}^A p(\bSigma_i^\alpha)
\end{equation}
for the parameters of category $i$.

Write $\cD_i=(\mX_i,\mY_i)$ for the training data of category $i$ and let $(x,Y)$ be a new obervation that we want to classify. The predictive classifier in Section \ref{classification} will involve the density $\omega_i = f(Y;x|\cD_i) = \E[f(Y;x,\theta_i|\cD_i)]$ of this new observation, in case it belongs to category $i$, where $f(Y;x,\theta_i)$ is the density function of the trait vector $Y$ conditional on the covariate vector $x$ and the parameter vector $\theta_i$. For a detailed derivation of the collection of model parameters $\theta_i = \left(\mB_i, \bSigma^1_i,\ldots,\bSigma_i^A\right)$ and the posterior category weights $(\omega_1,\ldots,\omega_N)$ we refer to Appendix \ref{ideal}. The Monte Carlo approximations of $\theta_i$ and $\omega_i$, $i \in \sfN$ are
\begin{equation} \label{Bayesest}
\Bayes_i = \frac{1}{R_i}\sum_{r=1}^{R_i} \theta_{ir}
\end{equation}
and
\begin{equation} \label{MComegaY}
\hatom_i = \frac{1}{R_i}\sum_{r=1}^{R_i} f(Y;x,\theta_{ir})
\end{equation}
respectively, where $R_i$ is the number of samples drawn from the posterior distribution of $\theta_i$, with $\theta_{ir}$ the parameter vector obtained by blockwise Gibbs sampling \citep{geman1984stochastic, robert2013monte} in simulation run $r$.

\subsection{General case; all types of obfuscation may occur} \label{realmod}
Overall our setup is the same as in Section \ref{idealmod}, but now we suppose there is only partial information about the complete training data 
set $\cD = \{(\mX_i,\mY_i);\, i=1,\ldots,N\}$. Due to some obfuscation, which could be due to rounding, grouping, categorization or lost measurements of some traits, 
we typically have a mixture of different types of observations \citep{de2013analysis} and only know that 
\begin{equation} \label{obfuscationset}
Y_{ij} \in \sfS_{ij} = \sfS_{ij1}\times \cdots \times \sfS_{ijq},
\end{equation}
i.e. the complete trait vector $Y_{ij}$ for subject $j$ of category 
$i$ is contained in a  hyperrectangle $\sfS_{ij}$, whose components 
are given by $\{\sfS_{ijk}\}_{k=1}^q$. These components are sets, ranging 
in possible size from singletons to infinite intervals of 
$\mathbb{R}$, and they are given by
\begin{equation} \label{obfuscationside}
\sfS_{ijk} = \begin{cases} Y_{ijk}, & k \notin \sfK_{ij}, \\
\left(c_{ijk},d_{ijk}\right], & k \in \sfK_{ij},
\end{cases}
\end{equation}
where $\sfK_{ij}= \left\{k; 1\le k \le q; \, Y_{ijk} \text{ obfuscated}\right\}$.
The obfuscations in \eqref{obfuscationset}-\eqref{obfuscationside} are of the same type as in IRT \citep{lord2008statistical, fox2010bayesian}. The main difference is that the covariance matrix of the trait vector $Y_{ij}$ in \eqref{traits} is arbitrary, whereas the covariance matrix of IRT models typically have a factor analysis structure \citep{spearman1904general}, so called mixture of factor analysis models for mixed data \citep[MFA-MD;][]{mcparland2017clustering}. 

More specifically, the obfuscations in \eqref{obfuscationset}-\eqref{obfuscationside} are of three main types. First, if a trait 
$Y_{ijk}$ is unobserved, written as $Y_{ijk} =  \texttt{NA}$, 
the $k$:th component of $\sfS_{ij}$ is of infinite length; e.g. 
$c_{ijk} = -\infty$, $d_{ijk} = \infty$, and we let the interval 
be open. That is, the interval $\sfS_{ijk}$ equals $\mathbb{R}$.
Secondly, a trait may be obfuscated in such a way that interval 
limits are observed. Rounding is a typical example of this; consider 
a measurement of a trait $y_{ijk} \in \mathbb{R}^+$ that has 
been rounded to $z_{ijk} \in 2\tau\cdot\mathbb{Z}^+$. We put $c_{ijk} = z_{ijk} - \tau$ and $d_{ijk} = z_{ijk} + \tau$, which constitute the limits of the
interval around $z_{ijk}$. Generally, we assume rounding to the midpoint of an 
interval. We can always scale so that the interval is of unit length, 
which would be equivalent to $\tau = 1/2$. Lastly we have the case 
when we observe an ordered categorical random variable $Z_{ijk}$. 
We assume there is an underlying normally distributed variable 
$Y_{ijk}$, and that each category  $z_{ijk}$ corresponds to an 
interval of possible values of $y_{ijk}$. Count data can be treated 
as an instance of this type of obfuscation, for instance traits that represent the   
number of occurrences of something.

We will treat all types of obfuscations in the following unified way.
Suppose trait $k$ of subject $j$ of category $i$ is imperfectly 
observed, i.e. $k \in \sfK_{ij}$. Let $g_{k}$ be the number of 
categories of this trait, which we number as $0,1,\ldots,g_{k}-1$.
The observed category is $z_{ijk} \in \left\{0,1,\ldots,g_{k} - 1\right\}$, 
where $g_{k} =2$ for binary data and $g_{k} = \infty$ for count data. 
The corresponding side of $\sfS_{ij}$ is 
\begin{equation*}
\sfS_{ijk} = \begin{cases}
            \left(-\infty, \frac{1}{2}\right], & \text{if } z_{ijk} = 0, \\
            \left(z_{ijk} - \frac{1}{2}, z_{ijk} + \frac{1}{2}\right], & \text{if } 1\le z_{ijk} \le g_{k}-2, \\
            \left(g_{k}-\frac{3}{2}, \infty\right), & \text{if } z_{ijk} = g_{k} - 1.
          \end{cases}
\end{equation*}
Here, a useful trick would be to add auxiliary categories, that never 
were observed, to take the place of $z_{ijk}=0$ and $z_{ijk}= g_{k} -1$. 
That ensures all observed intervals are of unit length, although we may 
let intervals vary in length if there is reason to construct such a model.
We also write 
\begin{equation*}
Z_{ijk} = z(\sfS_{ijk}) = \begin{cases}
            0, & \text{if } \sfS_{ijk} = \left(-\infty, \frac{1}{2}\right], \\
            \frac{c_{ijk} + d_{ijk}}{2}, & \text{if $\sfS_{ijk}$ is bounded}, \\
            g_{k} - 1, & \text{if } \sfS_{ijk} = \left(g_{k}-\frac{3}{2}, \infty\right],
          \end{cases}
\end{equation*}
for the center point of a finite or half-open and infinite $\sfS_{ijk}$, 
whereas $Z_{ijk} =z\left(\sfS_{ijk}\right) = Y_{ijk}$ when $Y_{ijk} = \sfS_{ijk}$ 
is perfectly observed. We will write the observed training 
data set as
\begin{equation*}
\cDobs = \left\{\left(x_{ij},\sfS_{ij}\right);\, i=1,\ldots,N,j=1,\ldots,n_i\right\}.
\end{equation*}
Finally, we remark on the importance (or lack thereof) of taking 
rounding into account. Consider rounding a Gaussian trait $Y_{ijk}$ 
to $z_{ijk}\in \mathbb{Z}$ for some $k$, $i=1,\ldots,N$ and 
$j=1,\ldots,n_i$. Suppose we have no covariates and that 
$Y_{ijk} \sim N(m_{ik},\sigma_{ik}^2)$ for $j=1,\ldots,n_i$. An 
unbiased estimator of $m_{ik}$ is the average $\bar{Y}_{ik}$, whereas 
$\bar{Z}_{ik}$ is a biased estimator of $m_{ik}$. It is possible to quantify the 
size of the bias using $\sigma_k$ and the width of the rouding 
interval $\mathsf{w} = \eta_k\sigma_k$ \citep{tricker1984effects}.
In short, the larger $\mathsf{w}$ is relative to $\sigma_k$, 
the larger the bias is, as measured by $\eta_k$. Already 
when $\sigma_k=\mathsf{w} = 1$, the bias is very small, 
and hence, unless an extremely precise mean estimate is needed, the bias 
is small compared to the uncertainty of the parameter estimate. Therefore, one might 
regard rounded values as true values, if the standard deviation of the trait that is rounded, is large enough.

Let $(x,\sfS)$ refer to a new observation, for which the trait vector $Y \in \sfS$ is obfuscated for traits $k \in \sfK$. We refer to Appendices \ref{real} and \ref{MCapp} for full details on the derivation of exact estimators of the model parameters $\theta_i$ and posterior weights $\omega_i = \Prob(\sfS;x\mid \mathcal{D}_i^{\text{obs}})$ of category $i$, as well as Monte Carlo approximtions thereof. In brief, for each category $i$ we want to generate the set  
\begin{equation} \label{paramset}
\left\{\theta_{ir}, Y_{ijkr}, \, 1\le j\le n_i, \, k\in\sfK_{ij}; Y_{kr}, k\in\sfK\right\}_{r=1}^{R_i}
\end{equation}
of $R_i$ blockwise Gibbs sampling iterates from the joint density
\begin{equation} \label{MCdens}
p(\theta_i|\cDobs_i)\prod_{j=1}^{n_i} f\left(y_{ij\sfK_{ij}} \mid x_{ij}, \sfS_{ij},\theta_i\right) f\left(y_{\sfK} \mid x, Y_{\sfK^\complement};\theta_i\right)
\end{equation}
of the parameters $\theta_i$, the imperfectly observed training data and the imperfectly observed new data point, where $\theta_{ir} = \left(\beta_{ir}, \bSigma_{ir}^1,\ldots,\bSigma_{ir}^A\right)$, 
$y_{ij\sfK_{ij}r} = \left(y_{ijkr}; k\in \sfK_{ij}\right)$, 
and $y_{\sfK r}=\left(y_{kr} ; k\in\sfK\right)$ refer to the values of these quantities for Monte Carlo iteration $r$, whereas $Y_{\sfK^\complement} = \left(Y_k ; k\notin \sfK\right)$. In particular, $\theta_{ir}$ are drawn from the posterior distribution $p(\theta_i|\cDobs_i)$ of $\theta_i$, conditional on observed data $\cDobs_i$ for category $i$. Whereas the Monte Carlo approximation of the Bayes estimator of $\theta_i$ is given by \eqref{Bayesest}, also for obfuscated data, the Monte Carlo estimator of $\omega_i$ is slightly more complicated than \eqref{MComegaY}, as described in Appendix \ref{realMCapprox}.

\subsection{Example model fit} \label{acromod}
We now have the tools to fit a classification model to our \textit{Acrocephalus} data described in Section \ref{case}. As we can tell from the end of Section \ref{case}, all traits are obfuscated in some way (many are missing), there is one covariate influencing the interpretation of the trait values and we have reason to believe there are different covariance classes.

Our trait vectors $Y_{ij}$ are attributed to species $i=1,\ldots,4$, where $i=1$ corresponds to Eurasian Reed Warbler, $i=2$ to Marsh Warbler, $i=3$ to Paddyfield Warbler and $i=4$ to Blyth's Reed Warbler, whereas $j = 1,\ldots,n_i$ denote individual birds within species. Each $Y_{ij}$ is of length $q=3$, where $Y_{ij1}$ is the \emph{wing length}, $Y_{ij2}$ is the \emph{notch length} and $Y_{ij3}$ is the \emph{notch position}. All traits are obfuscated, but in different ways: $Y_{ij1}$ is continuous, but rounded to nearest integer on a millimeter scale;  $Y_{ij2}$ is continuous, but rounded to nearest half millimeter; whereas $Y_{ij3}$ is ordered categorical.

We have one covariate \emph{age} with values \emph{juveline} or \emph{adult}, which we code as 0 and 1 respectively, so that \textit{juvenile} corresponds to the intercept. We denote this covariate by $x_1$, and it determines to which of the $A=2$ covariance classes each observation belongs. We denote the respective covariance matrices with $\Sigmajuv$ and $\Sigmaad$.

The degrees of freedom hyperparameter of the covariance matrix $\bSigma_i^\alpha$ of the prior is chosen as $\nu_0 = 10$, whereas the scale matrix $V_0$ has a diagonal of 15 and all other elements equal to 5. All covariance matrices have the same prior. Since there is $p=1$ covariate, the matrix used in the construction of the prior on the vectorized regression parameters, $\bSigma_{\mB_i}$, is a diagonal matrix with diagonal $(3,1)$ for all $i$. The mean parameter values $\mB_{i0}=\E(\mB_i)$ of the prior on $\mB_i$ are informative for each $i$ and based on the results in \citet{Malmhagen2013}, as shown in Table \ref{priorparams}.

\begin{table}[ht]
\centering
\caption{Hyperparameter values for the prior of the regression parameter matrices. These values are informed by \citet{Malmhagen2013}, except for the first element of the second row of $\mB_{30}$ and $\mB_{40}$, which we put to $0.75$, since we strongly believe that the pattern of slightly longer wings in adult plumages also hold for these two species.}
{\footnotesize
\begin{align*}
&\text{Reed Warbler} & &\text{Marsh Warbler} \\
\mB_{10} &= \begin{pmatrix}
          67.1 & 11.4 & 108 \\
          0.5 &  1.3 &  3
         \end{pmatrix}  &
\mB_{20} &= \begin{pmatrix}
          70.3 & 9.5 & 105 \\
          0.2 & 0.6 & 1
         \end{pmatrix}  \\
\\
&\text{Paddyfield Warbler} & &\text{Blyth's Reed Warbler} \\
\mB_{30} &= \begin{pmatrix}
          57.4 & 12.7 & 115.3 \\
          0.75 & 1.1 & 0.5
         \end{pmatrix} &
\mB_{40} &= \begin{pmatrix}
          62 & 12.5 & 113 \\
          0.75 & 1.1 & 1
         \end{pmatrix}.
\end{align*}}
\label{priorparams}
\end{table}



Fitting the model in \textsf{R} \citep{r2021}, with all traits regarded as obfuscated, we get the Bayes estimates presented in Table \ref{acroestimates}. Highest posterior density intervals for each parameter are presented in Appendix \ref{hpdint}. Overall, the effect of \textit{age}, our covariate, is to increase the trait values. However, the increase is different across traits and across species.

\begin{table}[!htbp] 
\caption{Bayes estimates $\hat{\mB}_{imk}=\hat{\E}(\mB_{imk}|\cD_i)$ and $\hat{\bSigma}_{ikl}^\alpha=\hat{\E}(\bSigma_{ikl}^\alpha|\cD_i)$ of all parameters of the \emph{Acrocephalus} model (cf. \eqref{Bayesest}), rounded to two decimals, except the \emph{notch position} trait where we present the categories that the regression parameters estimates fall into. The coding of these categories is explained in Table 2 of \citet{Malmhagen2013}.}
\centering
{\footnotesize
\begin{align*}
&\text{Reed warbler} & &\text{Marsh warbler} \\
\hat{\mB}_{1} &= \begin{pmatrix}
          66.70 & 11.06 & \text{P8/9} \\ 
          0.73 & 1.39 & \text{P9/10}
         \end{pmatrix}  &
\hat{\mB}_{2} &= \begin{pmatrix}
          70.05 & 9.45 & \text{P7} \\ 
          0.69 & 0.61 & \text{P7/8}
         \end{pmatrix}  \\ \\
&\text{Paddyfield Warbler} & &\text{Blyth's Reed Warbler} \\
\hat{\mB}_{3} &= \begin{pmatrix}
          57.31 & 12.67 & \text{T3/T2} \\ 
          0.27 & 1.14 & \text{T2}
         \end{pmatrix} &
\hat{\mB}_{4} &= \begin{pmatrix}
          62.28 & 12.49 & \text{T3} \\ 
          0.01 & 1.13 & \text{T3/T2}
         \end{pmatrix}
\end{align*}
\begin{align*}
&&\text{Reed Warbler}& \\
\hat{\bSigma}_{1}^{\texttt{juv}} &= \begin{pmatrix}
 2.34 & 0.63 & 0.04 \\ 
  0.63 & 0.64 & 0.52 \\ 
  0.04 & 0.52 & 1.46 
  \end{pmatrix} &
\hat{\bSigma}_{1}^{\texttt{ad}} &= \begin{pmatrix}
 2.67 & 0.53 & 0.40 \\ 
  0.53 & 0.64 & 0.45 \\ 
  0.40 & 0.45 & 1.56 
  \end{pmatrix} \\ \\
&&\text{Marsh Warbler}& \\
\hat{\bSigma}_{2}^{\texttt{juv}} &= \begin{pmatrix}
 2.19 & 0.40 & -0.05 \\ 
  0.40 & 0.44 & 0.19 \\ 
  -0.05 & 0.19 & 0.80
  \end{pmatrix} &
\hat{\bSigma}_{2}^{\texttt{ad}} &= \begin{pmatrix}
 2.51 & 0.49 & 0.22 \\ 
  0.49 & 0.53 & 0.18 \\ 
  0.22 & 0.18 & 0.79 
  \end{pmatrix} \\ \\
&&\text{Paddyfield Warbler}& \\
\hat{\bSigma}_{3}^{\texttt{juv}} &= \begin{pmatrix}
 2.27 & 0.27 & -0.43 \\ 
  0.27 & 0.88 & 0.28 \\ 
  -0.43 & 0.28 & 1.37
  \end{pmatrix} &
\hat{\bSigma}_{3}^{\texttt{ad}} &= \begin{pmatrix}
 4.59 & 0.92 & 0.83 \\ 
  0.92 & 1.46 & 0.57 \\ 
  0.83 & 0.57 & 1.31 
  \end{pmatrix} \\ \\
&&\text{Blyth's Reed Warbler}& \\
\hat{\bSigma}_{4}^{\texttt{juv}} &= \begin{pmatrix}
1.98 & 0.41 & 0.16 \\ 
  0.41 & 0.71 & 0.17 \\ 
  0.16 & 0.17 & 0.93
  \end{pmatrix} &
\hat{\bSigma}_{4}^{\texttt{ad}} &= \begin{pmatrix}
1.88 & 0.61 & 0.10 \\ 
  0.61 & 0.85 & 0.32 \\ 
  0.10 & 0.32 & 1.23 
  \end{pmatrix}.
\end{align*}}
\label{acroestimates}
\end{table}

\section{Classification} \label{classification}
We now have interpretable results on the variation of traits, but ultimately we also want to use our knowledge to classify new birds to species. In this section we present classification for the model of Section \ref{realmod}. First, we define more generally the posterior category weights $\omega_i$ that were introduced in Section \ref{idealmod} and then look at canonical classification. Then we introduce set-valued classifiers, including the possibility of classifying empty sets in order to handle outliers. We also showcase the flexibility of the underlying method for classifying among a subset of categories and end with remarks on the choice of the classifier's two tuning parameters $\rho$ and $\tau$.

Let 
\begin{equation}
\cDnew = (x,\sfS)
\label{Dnew}
\end{equation}
denote a new observation with obfuscated traits $\sfK$. We define the posterior weight of category $i$ as
\begin{equation} \label{omega}
\omega_i = \iint_{\sfS} \! f(Y;x,\theta_i) \prod_{k\in \sfK} \mathrm{d}y_{k} \, p(\theta_i \mid  \cDobs_i) \, \mathrm{d}\theta_i,
\end{equation}
where $f$ is the density function of the trait vector $Y=(y_1,\ldots,y_q)$ of the new observation, i.e. the multivariate Gaussian density function. As shown in Appendix \ref{real}, the Markov chain in \eqref{paramset} can be used to find estimates $\hat{\omega}_i$ of these weights. We may then approximate the posterior probability $\hat{p}_i = \hat{\Prob}(I=i \mid  \cDnew, \cDobs)$ of $\cDnew$ to be of category $i$ as
\begin{equation} \label{predprob}
\hat{p}_i = \hat{\Prob}(I=i \mid  \cDnew, \cDobs) = \frac{\pi_i\hat{\omega}_i}{\pi_1\hat{\omega}_1 + \ldots + \pi_N\hat{\omega}_N},
\end{equation}
where $I\in \sfN$ is the true but unknow category of the future observation, with prior distribution $\Prob(I=i)=\pi_i$.

Let $\cN = \mathcal{P}(\sfN) \setminus \emptyset$ denote the 
collection of all non-empty subsets of $\sfN$. Let $\classI \in \cN$ be a classifier 
with $\lvert\classI\rvert \ge 1$. In order to define $\classI$ 
we introduce a reward function $\cN \times \sfN \ni (\cI, i) \mapsto R(\cI,i)$ 
for all $\cI \in \cN$ and $i\in\sfN$. We interpret $R(\cI,i)$ as the reward of a classified set $\cI$ of categories when the true category is $i$. Then put
\begin{align*}
\classI &= \argmax_{\cI} \E \left[R(\cI,\mathrm{I}) \mid \cDobs, \cDnew\right] \\
&= \argmax_{\cI} \sum_{i=1}^N R(\cI,i)p_i
\end{align*}
as the optimal predictive classifier or optimal Bayesian classifier, with the complete training data set $\cDobs$
and $p_i$ defined as in \eqref{predprob}. Thus, $\hat{\rmI}$ is the set in $\cN$ that maximizes the expected posterior reward. Each classifier $\classI = \classI(\cDobs, \cDnew)$, viewed as a function of a perfectly observed new data point $\cDnew=(x,Y)$, partitions the test data space into decision regions
\begin{equation*}
\Omega_{\cI} = \{(x,Y); \classI = \cI \}
\end{equation*}
for all $\cI \in \cN$. A similar definition of decision regions applies for an obfuscated new data point (\ref{Dnew}), where the trait part $\sfS$ is defined as rectangles of a grid, with $\Omega_{{\cal I}}$ the collection of $(x,\sfS)$ for which $\hat{I}={\cal I}$. For any of these two definitions of decision regions, this gives rise to an \textit{indecisive region}
\begin{align*}
\Lambda = \bigcup_{|\cI| > 1} \Omega_{\cI},
\end{align*}
where we cannot distinguish one particular category with acceptable confidence,
only eliminate some of the categories with low degree of belief.

Whenever the indecisive region $\Lambda$ is nonempty, the classifier allows for a partial reject option. An important special case when this occurs is  Bayesian sequential analysis \citep{degroot1970optimal}, for instance in the context of clinical trials \citep{carlin1998approaches}, before a final decision between the $N$ hypotheses (or categories) has been made. This corresponds to a partition of the test data space $\Omega$ into $N+1$ regions, so that either one of the $N$ hypotheses is chosen ($\hat{\rmI} = \{i\}, i = 1,\ldots,N$) or the decision is postponed to a later time point ($\hat{\rmI} = \{1,\ldots,N\}$).

There is considerable freedom in choosing the reward function $R$. For instance, \citet{chow1970optimum} introduced a reward function for $N=2$ categories. This was generalized in Section 2.1 of \citet{ripley1996pattern} to arbitrary $N \ge 2$, using $R(\{i\},\rmI)=\I(i\in \rmI)$, $R(\{1,\ldots,N\},\rmI)=c$ for some constant $0\le c \le 1$ and $R({\cal I},\rmI) = 0$ for all ${\cal I}$ with $2\le |{\cal I}|\le N-1$. In the next two subsections we will present two reward functions, the first of which corresponds to Ripley's reward function when $|{\cal I}|=1$ and $|{\cal I}|=N$ (using $c=1/N$), but with a nonzero reward for $2\le |{\cal I}|\le N-1$. 

\subsection{Classification to one category}\label{oneclass}

Let
\begin{align} 
R(\cI, i) = \begin{cases} 0; & i \notin \cI \\
                             1/|\cI|; & i \in \cI \end{cases} \label{reward1}
\end{align}
which has expected posterior reward
\begin{align*}
\E \left[R(\cI, \rmI) \mid \cDobs, \cDnew \right] = \frac{1}{|\cI|}\sum_{i\in\cI} p_i
\end{align*}
and optimal classifier
\begin{align} \label{class1}
\classI = \{(N)\} = \argmax_i \pi_i \omega_i
\end{align}
where $p_{(1)} < \ldots < p_{(N)}$ are the ordered posterior 
category probabilities. Notice that the indecisive region is empty ($\Lambda = \emptyset$), i.e. 
this reward function leads to a consistent, but potentially 
overzealous, classifier. 

To estimate the probability of classifying wrongly using this approach, we simulated a large number of realisations $Y^\star$ from the predictive posterior distribution of the model in Section \ref{acromod} for each species, under the assumption of a uniform prior distribution over species. Each $Y^\star$ was then attributed to a hyperrectangle in trait space in order to represent obfuscation as described in Section \ref{case}, resulting in half side lengths or $\tau$-values $(1/2, 1/4, 1/2)$ for trait $j=1,2,3$ respectively. We then computed which species an observation each hyperrectangle would predict, using the classifier in \eqref{class1}. This gives a numerical approximation under current obfuscations of the probability of observing a new bird and classifying it wrongly, when using the fitted model on the \emph{Acrocephalus} data:
\begin{equation} \label{prederr1}
\begin{matrix}
\hat{\Prob}\left(\classI \ne \rmI \mid x_1=0 \right) = 0.0251, &
\hat{\Prob}\left(\classI \ne \rmI \mid x_1=1 \right) = 0.0264.
\end{matrix}
\end{equation}
Roughly 1 in 50 birds would be classified erraneously, when birds are distributed uniformly over the species under consideration. In Section \ref{rhoclass} we will show how to reduce this error by allowing for classification to sets of species.

\subsection{Classification with partial reject options} \label{rhoclass}
Choosing the reward function
\begin{align} 
R(\cI, i) = \I_{\{i\in\cI\}} - \rho \lvert \left\{ \iota \in \cI ; \iota\ne(N) \right\} \rvert p_{(N)}, \label{reward2}
\end{align}
we get the expected posterior reward 
\begin{align*} 
\E \left[ R(\cI, \rmI) \mid \cDobs, \cDnew \right] &= \sum_{i \in \cI} p_i - \rho\left(\lvert \cI \rvert - \I_{\{(N) \in \cI \}}\right)p_{(N)},
\end{align*}
which is maximized by
\begin{align} \label{class2}
\classI &= \left\{ i; p_i \ge \rho p_{(N)} \right\} = \left\{ i; \pi_i\omega_i \ge \rho \pi_{(N)}\omega_{(N)} \right\}.
\end{align}
Thus we can tune the risk of classifying wrongly by picking $\rho \in [0,1]$ adequately, 
as it specifies an upper bound on the fraction of the largest 
posterior probability $p_{(N)}$ other posterior probabilities may attain and 
still be excluded. If we choose $\rho =0$, we get the classifier 
$\classI = \sfN$ which means $\Prob(\rmI \in \classI) = 1$ for all 
new observations, but that prediction method does not provide any 
information at all. The other extreme, choosing $\rho = 1$, leads 
to $\classI = \{(N)\}$, and thus our classifier will be the same as
\eqref{class1}. In Section \ref{choosingrho} we present a way of choosing $\rho$ using cross-validation and a maximal accepted misclassification rate. In conclusion, our first classifier is a special case of the second.

Choosing $\rho = 0.1$ yields the exclusion critera $\hat{p}_i < \hat{p}_{(N)}/10$. With this value of $\rho$, we find that the estimated probability of classifying wrongly using the \textit{Acrocephalus} model rounded to four decimals are
\begin{equation*}
\begin{matrix}
\hat{\Prob}\left(\rmI \notin \classI \mid x_1=0 \right) = 0.0058, &
\hat{\Prob}\left(\rmI \notin \classI \mid x_1=1 \right) = 0.0058
\end{matrix}
\end{equation*}
and that the probability of not singling out a particular species is
\begin{equation*}
\begin{matrix}
\hat{\Prob}\left(\lvert\classI\rvert > 1 \mid x_1=0\right) = 0.0790, &
\hat{\Prob}\left(\lvert\classI\rvert > 1 \mid x_1=1\right) = 0.0819.
\end{matrix}
\end{equation*}
This means we have reduced the probability of choosing the wrong species by $76.8\%$ for juvenile birds and $77.9\%$ for adult birds, at a price of not singling out a species with a probability of about 8\% for any covariate value. Of the cases where $\lvert \hat{\rmI} \rvert > 1$, only $0.0044 \%$ will result in a classifier containing three species ($\lvert \hat{\rmI} \rvert = 3$), meaning that we will be able to exclude at least half of the potential species for the vast majority of observations.

\subsection{Classification with reject option}
Following Section 2.1 of \citet{ripley1996pattern}, if none of the $N$ categories support test data $\cDnew$ we would like to include $\emptyset$ as a possible output of the classifier 
$\classI$, so that $\classI \ \in \mathcal{P}(\sfN)$. To this end,
we denote the posterior weight of \eqref{omega} as $\omega_i(x,\sfS)$ 
in order to emphasize its dependence on the test data set (\ref{Dnew}). Then let
\begin{equation} \label{omegabar}
\bar{\omega}_i(x,\sfS) = \iint \! p(\theta_i \mid \cDobs_i)p(\sfS^\prime,x;\theta_i)\, \rmd \theta_i \rmd \sfS^\prime
\end{equation}
where the outer integral is taken over all $\sfS^\prime$ such 
that $\omega_i(x,\sfS^\prime) \le \omega_i(x,\sfS)$. We interpret 
$\bar{\omega}_i(x,\sfS)$ as a $p$-value of test data $(x,\sfS)$ 
for category $i$, i.e. the probability of observing an obfuscated 
trait vector $\sfS^\prime$ of category $i$ with covariate vector 
$x$, whose posterior weight $\omega_i(x,\sfS^\prime)$ is at most 
as large as that of $(x,\sfS)$. As such, it is a measure of the 
degree of outlyingness of $\cDnew$. Our treatment differs from that of \citet{ripley1996pattern} in that we define $N$ distinct $p$-values of test data, one for each category, whereas \citet{ripley1996pattern} defines one single $p$-value for the mixture distribution of all categories. More specifically, given a value of 
$\rho$, we generalize the classifier \eqref{class2} to
\begin{equation} \label{class3}
\classI = \left\{i; \pi_i\omega_i \ge \rho\pi_{(N)}\omega_{(N)} \land \pi_i\bar{\omega}_i \ge \tau\right\}
\end{equation}
where $\bar{\omega}_i = \bar{\omega}_i(x,\sfS)$. Note that 
\eqref{class2} is a special case of \eqref{class3} with $\tau = 0$.

Choosing $\tau = 0.001$ results in
\begin{equation*}
\begin{matrix}
\hat{\Prob}\left(\hat{\rmI} = \emptyset \mid x_1=0 \right) = 6.62\cdot10^{-4}, &
\hat{\Prob}\left(\hat{\rmI} = \emptyset \mid x_1=1 \right) = 6.32\cdot10^{-4}.
\end{matrix}
\end{equation*}
The probability of not choosing a set containing the correct species is
\begin{equation*}
\begin{matrix}
\hat{\Prob}\left(\rmI \notin \classI \mid x_1=0 \right) = 0.0066, &
\hat{\Prob}\left(\rmI \notin \classI \mid x_1=1 \right) = 0.0065,
\end{matrix}
\end{equation*}
whereas the probability of not singling out a particular species ($|\hat{\rmI}|>1$) or getting an outlier ($|\hat{\rmI}|=0$) is
\begin{equation*}
\begin{matrix}
\hat{\Prob}\left( |\hat{\rmI}|\ne 1 \mid x_1 = 0 \right) = 0.0791, &
\hat{\Prob}\left( |\hat{\rmI}|\ne 1 \mid x_1 = 1 \right) = 0.0820.
\end{matrix}
\end{equation*}
These probabilities are very close to the ones in Section \ref{rhoclass}, meaning we can hedge the risk of classifying something that might be a new species (not belonging to $\{1,\ldots,N\}$) entirely at a low cost. The decision regions for these values on $\rho$ and $\tau$ are presented graphically in Appendix \ref{graphs}, where we cover all classification scenarios with missing trait values as well.

It is also possible to include classification with empty outputs in the context of indifference zones \citep{bechhofer1954single, goldsman1986tutorial}. Assume that the parameter space is divided into $N+1$ regions, the first $N$ of which correspond to each of the $N$ hypotheses (or categories), whereas the last region of the parameter space (the indifference zone) corresponds to scenarios where no hypothesis is adequate. Based on this, it is possible to divide the test data space into $N+1$ regions as well, depending on whether the posterior distribution of the parameter puts most of its probability mass in any of the first $N$ parameter regions ($\hat{\rmI} = \{i\}, i =1,\ldots,N$) or in the indifference zone ($\hat{\rmI} = \emptyset$).  

\subsection{Subproblem accessibility}
Having fitted the model to the whole set of species, one may use the fit for any subproblem, e.g. classifying a bird between two species when the others are, for some reason, ruled out. Taking species 1 and 2, i.e. \emph{Eurasian Reed Warbler} and \emph{Marsh Warbler}, we estimate the probability of classifying wrongly and the probability of ending up in the indecisive region $\Lambda$ analogously with Section \ref{rhoclass}. Using $\rho=0.1$ and $\tau = 0$, we find that
\begin{equation*}
\begin{matrix}
\hat{\Prob}\left(\rmI \notin \classI \mid x_1=0 \right) = 0.0037, &
\hat{\Prob}\left(\rmI \notin \classI \mid x_1=1 \right) = 0.000971
\end{matrix}
\end{equation*}
\begin{equation*}
\begin{matrix}
\hat{\Prob}\left(\lvert\classI\rvert > 1 \mid x_1=0\right) = 0.0495, &
\hat{\Prob}\left(\lvert\classI\rvert > 1 \mid x_1=1\right) = 0.0099
\end{matrix}
\end{equation*}
for this particular subproblem.

\subsection{Choosing $\tau$}
A potential risk for misclassification is observing a subject of a 
category not even considered for classification. In order to mitigate this, we introduced $\tau$ as a cut-off value for 
the trait distributions. Indeed, the value of $\tau$ determines
how large deviations in trait measurements we accept without
suspecting that we actually observe a subject from an 
unconsidered category. Choosing $\tau = 0$ allows us to classify 
any point in the whole trait space, i.e. we believe the 
model is perfect in the sense that no unconsidered categories will be observed.

\section{Model selection using cross-validation} \label{modelselection}
Model selection can be used to select covariates and/or traits from a larger set by trading goodness-of-fit against parsimony for each candidate model $m$. Bayesian model selection requires a prior distribution on $m$ \citep{smith1980bayes, kass1995bayes, green1995reversible, tadesse2005bayesian}. This approach involves the computationally intractable likelihood ${\cal L}({\cal D}_i^{obs})$ for each category $i$ and candidate model $m$. Non-Bayesian approaches, on the other hand, typically involve the observed likelihood ${\cal L}(\hat{\theta}_i;{\cal D}_i^{obs})$ for each category $i$ and model $m$, with $\hat{\theta}_i$ the ML-estimator of $\theta_i$, as well as another term that penalizes large models. This includes the AIC \citep{akaike1974a, akaike1998information}, NIC \citep{murata1991criterion}, BIC \citep{schwarz1978estimating, fraley1998many}, and integrated complete-data likelihood \citep[ICL;][]{biernacki2000assessing} criteria, as well as stepwise model selection procedures based on pairwise hypothesis testing \citep{mclachlan2014number}. Here we will use an intermediate approach that retains the Bayesian assumption of random $\theta_i$ for each model $m$ but then compare these models in terms of their predictive performance. To this end we will use $\kappa$-fold cross validation. This will be illustrated on another original, unpublished data set, on two subspecies of Common chiffchaff (\textit{Phylloscopus collybita}), the \textit{collybita} and \textit{abietinus} subspecies. It contains measurements for birds classified to subspecies visually by an expert at Falsterbo Bird Observatory.

\subsection{Cross-validation for our type of models}
The idea of $\kappa$-fold cross-validation is well established, and used for a very wide range 
of model families, see e.g. \citet[p.~256]{wood2017generalized}. 
Choosing $\kappa = n_i$ for category $i$ corresponds to the basic form of 
cross-validation \citep{cover1969learning, stone1974cross}. This procedure is however computationally expensive, because one
has to fit as many models ($=\sum_{i=1}^N n_i$) as there are observations.
Since the method under study is already computationally 
intensive, in particular under widespread obfuscation, 
large $q$ and large data sets, we recommend using $\kappa$-fold 
cross-validation with $\kappa$ a bit smaller, i.e. a non-exhaustive cross-validation.
In an interesting paper \citet{kohavi1995study}
examines cross-validation in general when choosing between classifiers. 
Kohavi concludes that $\kappa < n_i$ is generally preferred when 
picking the best classifier using cross-validation.

To perform $\kappa$-fold cross-validation in general for our class of models, 
we begin by choosing $\kappa \in \mathbb{Z}^+$ independently of $i$. 
Then create fold $l$ for category $i$ by choosing uniformly at random a 
set $J_{il} \subset\{1,\ldots,n_i\}$ comprising $n_i/\kappa$ or $n_i/\kappa +1$ 
observations of $\cDobs_i$, the training data at hand for category $i$. 
Repeat this for all categories until we have left-out test data sets $J_l = \cup_{i=1}^N J_{il}$ for $l = 1,\ldots, \kappa$. Then for each $l$ proceed to fit models on the observations $\cDobs_{(-l)} = \cDobs \setminus \{(x_{ij}, \sfS_{ij}); j \in J_{il}\}_{i=1}^N$ that were not left out and then estimate the posterior category probabilities $\hat{p}_1(\cDobs_{(-l)},(x_{ij}, \sfS_{ij})),\ldots,\hat{p}_N(\cDobs_{(-l)},(x_{ij}, \sfS_{ij}))$ for each observation $(x_{ij}, \sfS_{ij})\in J_l$ that was left out. Choosing a reward function $R$, the corresponding classifier $\hat{\rm{I}}$ may be applied to each set of posterior probabilities in order to generate predictions $\classI_{(-l)ij}$ of the category, or set of categories, of the left-out observations contained in $J_l$. As this is repeated for each $l$, all observations will have been left out at one point when the cross validation is finished.


To assess the predictive 
performance of the $M$ models under consideration, let $w_i >0$ be weights 
such that $\sum_{i=1}^N w_i = 1$, and compute
\begin{equation} \label{cv}
R^{\texttt{cv}}_m = \sum_{i=1}^N \frac{w_i}{n_i} \sum_{l=1}^\kappa \sum_{j\in J_{il}} R(\classI_{(-l)ij},i),
\end{equation}
with a classifier \eqref{class2} that corresponds to a prespecified value 
of $\rho$ and $\tau$, for $m=1,\ldots,M$, and an appropriately chosen reward function, such as \eqref{reward1} or \eqref{reward2}. One could e.g. 
use the weights $w_i = n_i / \sum_{a=1}^N n_a$ or $w_i = 1/N$, depending 
on whether it is more valuable to be able to predict a category with many observations or 
not. Based on \eqref{cv}, the best classifier is 
\begin{equation*}
\hat{m} = \argmax_{m} \left(R^{\texttt{cv}}_1, \ldots, R^{\texttt{cv}}_M\right).
\end{equation*}
When having computed $R^{\texttt{cv}}_m$, for $m=1,\ldots,M$, one has the 
possibility to choose a more parsimonious model that performes almost as 
well as $\hat{m}$, under the usual tradeoff between simplicity and predictive performance.

\subsection{Choosing traits and covariates} \label{chooseyandx}
We will now examplify usage of cross validation with $\kappa=10$ folds for model selection. At hand we have a data set over two subspecies of Common chiffchaff (\emph{Phylloscopus collybita}). Each subspecies has $q=9$ continuous traits measured. Although they are measured to half millimeters, we will regard them as perfect observations ($\cDobs = \cD$), since the rounding bias is extremely small. Moreover we have two binary covariates (\textit{age} and \textit{season}), and we are interested in which combination of covariates and traits that best predicts the subspecies of a new bird. We will consider models with one covariate, both covariates and both covariates including an interaction term, meaning that the number of predictors $p$ ranges from 1 to 3. 

For each covariate scenario, we fit a model $m$ with one trait, for each trait in turn, and then choose the one with the highest $R^{\texttt{cv}}$ value, based on the reward function \eqref{reward1}. Keeping this trait, we then add another trait and repeat the procedure, and keep the two traits that yield the highest $R^{\texttt{cv}}$ value. The procedure is repeated until we have a model with 8 traits, and for each step, the value of $R^{\texttt{cv}}$ is stored. We used the weights $w_i = n_i / \sum_{k=1}^N n_k$ for all $i$, and put $\rho =1$ and $\tau = 0$, i.e. used the classifier in \eqref{class1}  in the computation of $R^{\texttt{cv}}$.

The main reason for doing this forward selection-like procedure is to reduce the number of models that  are fitted from about $(2^9-1)\cdot 4 \cdot 10 = 20440$ (if every trait combination is tried) to $(9+8+\ldots+1)\cdot 4\cdot 10 = 1800$. Also, for each included trait, if the value of $R^{cv}$ does not increase, we may choose a more parsimonious model. 

Figure \ref{cvplot} shows a plot of how $R^{\texttt{cv}}$ changes with the number of included traits for the various covariate combinations.

\begin{figure}[h!]
  \centering
  \includegraphics[width=\linewidth]{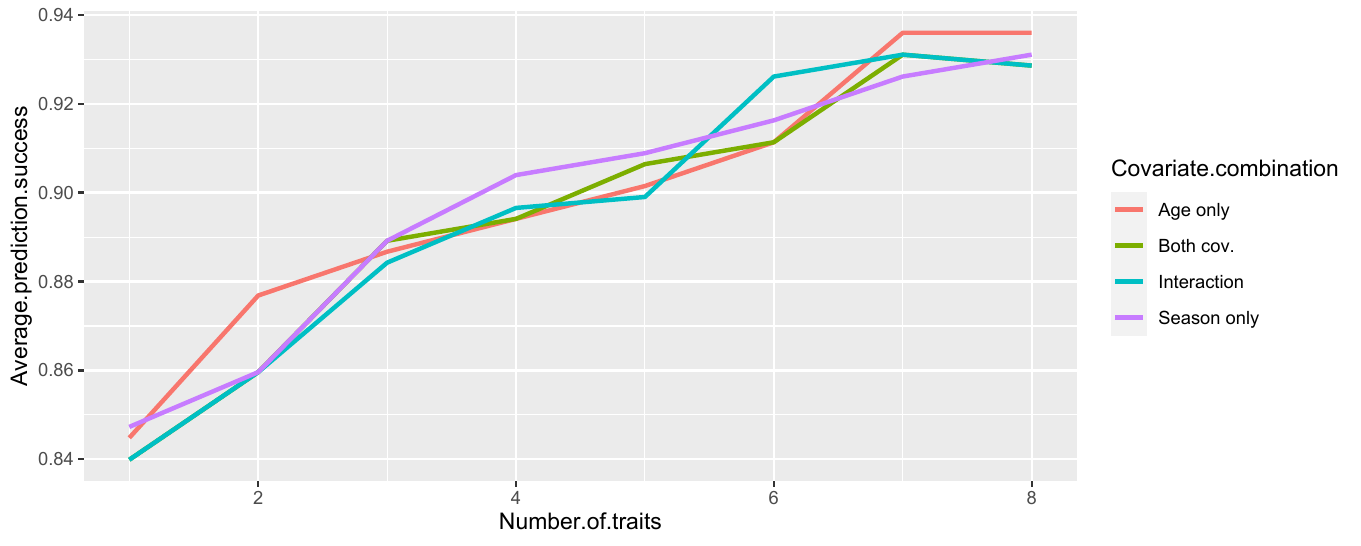} 
  \caption{Each line represents, for a particular covariate combination, the change in average prediction success of the left-out data in the cross validation procedure, as the number of traits increases. In more detail, we use \eqref{reward1} as our reward function and $w_i = 1/N$ for all categories $i$ in \eqref{cv}. No obvious differences between the covariate combinations are visible. It is, however, clear that 3 of the covariate scenarios reach their maximum cross-validation prediction accuracy with 7 traits. The set of included traits are not the same for all covariate scenarios.}
  \label{cvplot}
\end{figure}

\subsection{Choosing $\rho$} \label{choosingrho}
We will now present how the choice of $\rho$, for a classifier $\hat{I}$ based on a reward function \eqref{reward2}, can be made more intuitive by letting the chosen value $\rho_\delta$ correspond to a highest acceptable misclassification rate $\delta$. To this end, we first introduce a reward  function  
\begin{equation} \label{tilder}
 \tilde{R}(\cI,i) = \begin{cases}
             1, \text{ if $i \in \cI$,} \\
             0, \text{ otherwise,}
            \end{cases}
\end{equation}
that outputs 1 when the true category belongs to the classified subset of categories, and otherwise 0. Interpreting $\tilde{R}(\cI,i)=1$ as a correct classification, we estimate the rate of correct classifications through cross-validation as
\begin{equation} \label{mcr}
 R_\rho^{\text{cv}} = \sum_{i=1}^N \frac{w_i}{n_i} \sum_{l=1}^\kappa \sum_{j\in J_{il}} \tilde{R}(\classI_{(-l)ij},i),
\end{equation}
in analogy with \eqref{cv}. In order to speed up computation, we first perform cross-validation to generate category weights $\omega_1(x_{ij},\sfS_{ij}),\ldots,\omega_N(x_{ij},\sfS_{ij})$ for each observaton $(x_{ij},\sfS_{ij})$ in $\cDobs$. We then compute the probabilities $p_1(x_{ij},\sfS_{ij}), \ldots,$ $p_N(x_{ij},\sfS_{ij})$ of all categories according to \eqref{predprob}, given a prior distribution $(\pi_1,\ldots,$ $\pi_N)$ on the categories. The category probabilities can then be used to create classifiers $\hat{I}_{(-l)ij}$ for each observation $(x_{ij},\sfS_{ij})$ with $j\in J_l$, for a grid of $\rho$-values. These classifiers are then inserted into \eqref{tilder}, alongside the correct category, and the misclassification rate $1 - R_\rho^{cv}$ is computed. We may then pick
\begin{equation}
 \rho_\delta = \sup \{\rho^\prime : 1- R_{\rho^\prime}^\text{cv} \le \delta\}.
\end{equation}

To illustrate this, we performed $\kappa = 31$-fold cross-validation for the \textit{acrocephalus} data. Recall that the majority of observations in these data are partial, and thus classification is hard (see e.g. the decision regions in Appendix \ref{graphs}). To compute the category weights, for each fold $J_l$ we used the Bayes estimates $\hat{\theta}_{(-l)i}$ as in \eqref{Bayesest}, based on the observations $(x_{ij},\sfS_{ij})$ of category $i$ not left out ($j\notin J_l$), as parameter values for the trait distributions. Then we numerically integrated
\begin{equation}
 \omega_{\iota}(x_{ij},\sfS_{ij}) = \int_{\sfS_{ij}} f\left(y; x_{ij}, \Bayes_{(-l)\iota}\right) \, \rmd y, \quad \iota=1,\ldots,N,
\end{equation}
for each $(x_{ij},\sfS_{ij})$ with $j\in J_l$ and $l=1,\ldots,\kappa$. We used two different priors in the computation of $p_1(x_{ij},\sfS_{ij}),\ldots,p_N(x_{ij},\sfS_{ij})$; a uniform prior ($\pi_i = 1/N$ for all $i$) and a prior proportional to the number of observations in each category ($\pi_i = n_i/\sum_{\iota=1}^Nn_{\iota}$). As in Section \ref{chooseyandx}, we chose $w_i = n_i / \sum_{k=1}^N n_k$ for all $i$. The classifiers were created according to \eqref{class2} and for $\rho = 0.01, 0.02,\ldots,0.99$ we computed \eqref{mcr} using two different reward functions. First the reward function $\tilde{R}(\cI,i)$, defined in \eqref{tilder}, and then a second reward function
\begin{equation} \label{dotr}
 \dot{R}(\cI,i) = \begin{cases}
             1, \text{ if $i = \cI$,} \\
             0, \text{ otherwise,}
            \end{cases}
\end{equation}
which is more restrictive, as we consider the classifier to be correct when it contains only the correct category. The second reward function \eqref{dotr} is not used for choosing $\rho$, but rather in order to illustrate how its misclassification rate differs from the one obtained from the other less restrictive reward function $\tilde{R}$. The misclassification rates as functions of $\rho$ are presented in Figure \ref{misclassificationrates} for both priors. In Table \ref{spmcr} we give the misclassification rates for each category using the uniform prior, for $\rho=1$ and $\rho = 0.3$.

\begin{figure}[h!] 
  \centering
  \begin{tabular}{cc}
    \includegraphics[width=0.5\linewidth]{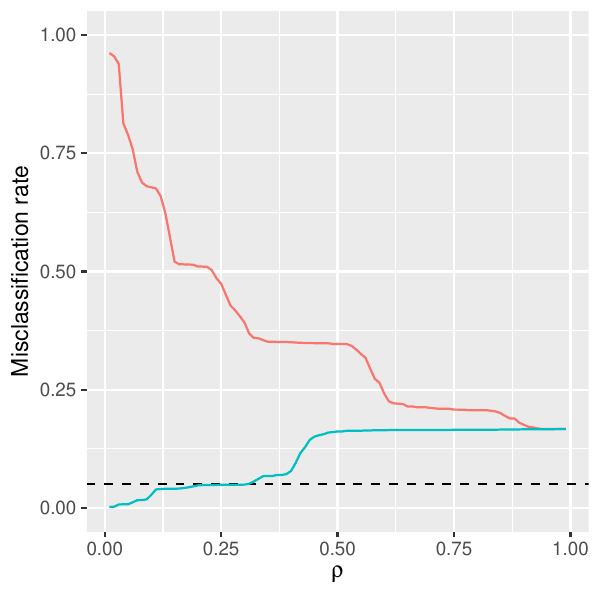} & 
    \includegraphics[width=0.5\linewidth]{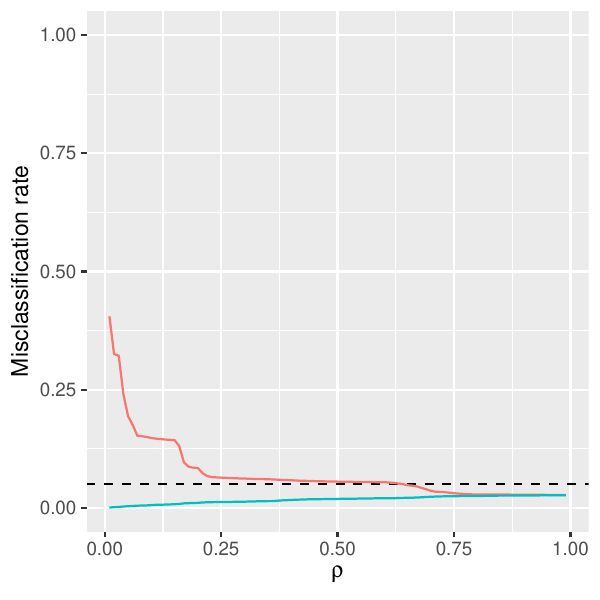} \\ [\abovecaptionskip]
     \small (a) Uniform prior. & 
     \small (b) Proportional prior.
  \end{tabular}
  \caption{The graphs show the misclassification rate $1-R_\rho^{cv}$ as a function of $\rho$ for $\tilde{R}$ in blue and $\dot{R}$ in red. Notice how the two curves are both monotone in $\rho$ (but increasing and decreasing respectively) and converge at $\rho=1$, i.e. when both classifiers output sets of cardinality 1. The dashed black line is the chosen limit $\delta$ on the misclassification rate of 5\%. For (a), the largest value of $\rho$ that does not exceed a misclassification rate of 5\% (for reward function $\tilde{R}$) is $\rho_{0.05}=0.3$, at which 32.9\% of the classified birds  are wrong and/or contain two or more categories. For (b) the corresponding value of is $\rho_{0.05}=1$, and the misclassification rate is the same for both reward functions.}
  \label{misclassificationrates}
\end{figure}

\begin{table}[ht] 
\centering
\begin{tabular}{rrr}
  \hline
 Species & $\rho = 1$ & $\rho = 0.3$ \\ 
  \hline
Reed warbler & 17.4\% & 5.1\%  \\ 
  Blyth's reed warbler & 2.8\%  & 0.9\%  \\ 
  Paddyfield warbler & 12.9\% & 3.2\% \\ 
  Marsh warbler & 6.9\% &  2.2\% \\ 
   \hline
\end{tabular}
\caption{The misclassification (as in rejecting the correct category) rate for each category using the uniform prior and reward function $\tilde{R}$. The majority of observations are partially observed leading to the high rates of misclassification (cf. Section \ref{classification}).}
\label{spmcr}
\end{table}


The parameter $\rho$ can be used as a measure of how managable a classification problem is, given $\delta$. For any classification problem, we may first choose the misclassification rate upper limit $\delta$, and then find the highest value $\rho_\delta$ of $\rho$ for which the misclassification rate, based on reward function $\tilde{R}$, is below $\delta$. A more managable problem would then be one for which $\rho_\delta$ is higher. Classification problems where $\rho_\delta=1$ can be considered automizable, as there are sufficiently few dubious cases not to warrant any closer examination by an expert. 

\section{Comparison with discriminant analysis in the \textsf{R} package \texttt{mclust}} \label{comparison}
The \texttt{mclust}-package is the most popular package for multivariate clustering and classification \citep{mclust} using \textsf{R}. It contains methods for clustering and classification of observations stemming from mixtures of Gaussian distributions, with cross-validation readily available to assess classification accuracy of the discriminant analysis models.
 We will use the discriminant analysis function \texttt{MclustDA} in order to compare its classification performance with our proposed method, using the \textit{Acrocephalus}-data. 
 
If one specifies prior distributions in the call to \texttt{MclustDA}, inference of model parameters is done through maximum aposteriori estimation \citep{fraley2007bayesian}. Then, in the next step these estimates are plugged into a Bayseian classifier with known parameters \citep[Section 6.1 of][]{fraley2002model}. In contrast, our classification scheme is fully Bayesian \eqref{omega}-\eqref{predprob}, with integration over posterior distributions of parameters. The priors on the model parameters differ somewhat, since we assume independent priors on the mean vector and covariance matrix of each class' Gaussian distribution, whereas \texttt{MclustDA} does not \citep{fraley2007bayesian}. Moreover, we may specify priors for the mean vector for each class in our proposed method, while \texttt{MclustDA} does not permit this. Therefore, we chose the default option in \texttt{MclustDA}, which uses the mean of the traits from the data set. For our proposed method, we use the informative priors of Section \ref{acromod}. The prior on the species is uniform in the cross-validation scheme for both methods.
 
A further difference is that \texttt{MclustDA} do not permit the use of partial observations and covariates. Therefore, we reduce the \textit{Acrocephalus}-data considerably to only consist of complete observations and partition it according to the binary covariate \textit{age}. Consequently, two models are fitted with each method. These are then evaluated using 100-fold cross validation to estimate the prediction accuracy.
 
 
Results of the comparison of classification accuracy are presented in detail in Tables \ref{prederr}, \ref{juvclasstab} and \ref{adclasstab}. The take-away is that the methods preform similarily, except for juvenile birds, where our method to a larger extent than \texttt{MclustDA} classifies \textit{Reed warbler} as \textit{Marsh warbler}, and this is the source of its larger overall classification error probability among the juveniles.
 
Lastly, note also that the error probabilities with our method are higher than the estimated error probabilities \eqref{prederr1} of Section \ref{oneclass}. Although the estimates are computed in different ways, this nevertheless indicates the added value of including partial observations (as in Section \ref{oneclass}) when available.
 
 \begin{table}[ht] 
 \caption{Misclassification rate is the probability of classifying a new observation wrongly, estimated by 100-fold cross validation, when new observations are assumed to have a uniform distribution over species.}
\begin{center}
  \begin{tabular}{r|ll}
    Misclassification rate & \multicolumn{2}{c}{Method} \\
    \cline{2-3}
    Age class & \texttt{mclust} & K-H \\
    \hline
     Adult & 3.05 \% & 3.56 \% \\
     Juvenile & 1.59 \% & 3.74 \%
  \end{tabular}
\end{center}
\label{prederr}
 \end{table}

\begin{table}[ht]
\centering
\caption{Classification table for \textit{adult} birds using the \texttt{cvMclustDA} cross-validation method (referred to as \texttt{mclust} below), and the method proposed in this paper, (referred to as K-H below). The similarities of the methods are high, with a slightly higher accuracy for the \texttt{mclust}-method for Blyth's reed warbler, and a slightly higher accuracy for the proposed method for the Paddyfield warbler.}
{\small
\begin{tabular}{r|rr|rr|rr|rr}
 & \multicolumn{8}{c}{True species}\\
 \cline{2-9}
  & \multicolumn{2}{r}{Reed warbler} & \multicolumn{2}{r}{Blyth's reed w.} & \multicolumn{2}{r}{Paddyfield w.} & \multicolumn{2}{r}{Marsh warbler} \\ 
 Predicted species & \texttt{mclust} & K-H & \texttt{mclust} & K-H & \texttt{mclust} & K-H & \texttt{mclust} & K-H\\
   \hline
Reed warbler&  76 & 76 & 0 & 0 & 0 &   0 & 0  &   0 \\ 
  Blyth's reed warbler & 0 & 1 & 65 & 63 & 2 &  1 & 0 & 0 \\ 
  Paddyfield warbler & 0 & 0 & 3 &  5 & 10 &  11 &0&   0 \\ 
  Marsh warbler & 1 &  0 & 0 &  0 & 0 &  0 & 40 & 40 
\end{tabular}
}
\label{juvclasstab}
\end{table}

\begin{table}[ht]
\centering
\caption{Classification table for \textit{juveline} birds using the \texttt{cvMclustDA} cross-validation method (referred to as \texttt{mclust} below), and the method proposed in this paper, (referred to as K-H below). The biggest difference between the two methods is in the lacking ability of K-H to classify Reed warblers correctly. As for adult birds, the proposed method has better accuracy for Paddyfield warblers, and contrary to adult birds, also for Blyth's reed warbler, although the differences are small.}
{\small
\begin{tabular}{r|rr|rr|rr|rr}
 & \multicolumn{8}{c}{True species}\\
 \cline{2-9}
  & \multicolumn{2}{r}{Reed warbler} & \multicolumn{2}{r}{Blyth's reed w.} & \multicolumn{2}{r}{Paddyfield w.} & \multicolumn{2}{r}{Marsh warbler} \\ 
 Predicted species & \texttt{mclust} & K-H & \texttt{mclust} & K-H & \texttt{mclust} & K-H & \texttt{mclust} & K-H\\
   \hline
Reed warbler&  403 & 382 & 0 & 0 & 0 &   0 & 5  &   5 \\ 
  Blyth's reed warbler & 0 & 2 & 40 & 41 & 2 &  1 & 0 & 0 \\ 
  Paddyfield warbler & 0 & 0 & 1 & 0 & 16 &  17 & 0 &   0 \\ 
  Marsh warbler & 6 &  25 & 0 &  0 & 0 &  0 & 409 & 409 
\end{tabular}
}
\label{adclasstab}
\end{table}

\section{Discussion} \label{discussion}

Throughout this paper, we have defined and analysed a classification 
problem with two sets of observations 
for each subject; its trait vector $Y$ and its covariate vector $x$,
where $x$ is informative about the interpretation of $Y$. Since the trait values are often subject to various types of obfuscation, we set up a unified Bayesian framework for these situations, using a latent multivariate 
Gaussian distribution, with parameters estimated through supervised learning and a blockwise Gibbs sampling algorithm. To formalize the classification, we introduced reward functions with a set-valued input argument and two tuning parameters $\rho \in [0,1]$ and $\tau \in [0,1)$. The choice of $\rho$ affects the size and location of the indecisive 
region $\Lambda$ of our discriminant rule with a partial reject option. This region is the part of observational space where our 
classifer does not have sufficient information to rule out all but one category, whereas $\tau$ puts a limit on how much we allow an observation to deviate from the bulk of the data and still allowing it to be classified by our decision rule. Finally, we present a method of covariate and/or trait selection, through cross-validation, in order to obtain classifers that are both parsimonious and efficient. 

Overall, there are two main usages of the method presented in this 
paper. First, one may derive distinguishing characteristics (cf. Appendix \ref{graphs}) of the categories considered, due to interpretability. Secondly, one may use a fitted model to classify new observations with statistical rigour. An example of the 
usefulness of the first case would be an ornithologist with a set 
of birds of known taxa, who doesn't know what morphologically separates
these taxa. Using this method, she may extract for which trait 
measurements there is a high probability of certain taxa and 
thereby create (and write down) an identification key. Further, 
if there are too many combinations of trait levels to memorize, 
the Bayesian procedure we have described may perform the classification in an automized way.   

A number of extensions of our work are possible. The first one regards the distribution of the latent variable. Although a Gaussian distribution is frequently used within quantitative genetics, there are instances when other distributions are preferable. As trait distribution, due to the central limit theorem, the Gaussian is particularly suitable when the traits are assumed to be polygenic and possibly affected by environmental factors \citep{LyWa98}. A conceptually critical but often numerically negligable adjustment is to correct all latent Gaussian distributions by truncating them to positive values for some of the traits. The trait \textit{wing length} in our first real world example has to be positive by definition, and hence we should adjust our Gaussian distribution 
accordingly. However, considering the values of the parameter estimates (see Table \ref{acroestimates}), it would essentially make no difference to impose such a restriction for this particular data set. In other cases, it could be more important. When multimodality and skewness is a concern, an extension to hidden variables with a normal mixture distribution is an option \citep{HaTi96, mcparland2014clustering}. 

A second extension is to incluce nominal (unordered) traits. This can be achieved by incorporating elements of a multinomial probit model \citep{geweke1994alternative, mcparland2014clustering}. Although bird traits are typically ordered, it might be of interest to incorporate nominal traits when more diverse taxa are classified, or when genetic data in terms of unordered genotypes is part of the trait vector. 

Third, it would be of interest to consider model selection methods other than cross-validation. For instance, a Monte Carlo approximation of the BIC criterion, the so called BIC-MCMC approach, performed well in some previous analyses \citep{fruhwirth2011label, mcparland2017clustering}.

Fourth, the reliance on training data with known categories can potentially 
be relaxed, or at least partially so. Without a reference material, which serves as training data, the nature of our problem would correspond to unsupervised learning and to a large extent fall under general model-based clustering \citep{fraley2002model, mcparland2014clustering, mcparland2017clustering} of observations taken from a mixture of multivariate Gaussian distributions. To take a step towards unsupervised learning in our setting, one would need to allow for indecisive regions where classification is ambigous. Such a clustering algorithm with partial reject options would transfer ideas in \citet{chow1970optimum, ripley1996pattern, herbei2006classification} and this paper of having incomplete decisions, to a framework of unsupervised learning. A challenge in this context is to incorporate the effect of covariates. Specifying the number of clusters in advance
would make the problem more well behaved, but might oftentimes be impossible, since the fundamental problem to solve in practice would be to determine this number of clusters. Still, this would allow the method to be used in situations where it is not known how to classify observations at all, and thus make it possible to investigate how many clusters a given data set supports. 

Fifth, modifying the method slightly in order to handle repeated 
measurements is a straightforward task within our multivariate Gaussian framework. The benefit with repeated measurements of the traits is a better understanding of the magnitude of the measurement error, when trait vectors of observations are replaced by averaged trait vectors, for all repeated measurements. One could then include the number of measurements into the classification method, with direct effects on the size of the indecisive region and the accuracy of the classifier.

Sixth, as mentioned in Section \ref{modelformulation}, we assume prior independence between the regression parameters of different categories. This allows 
the effect of a covariate to vary between the categories, as 
opposed to forcing apriori a covariate to have the same effect across categories. 
However, Appendix \ref{hpdint} lists the posterior means of the 
covariate effects from our example model of Section \ref{acromod}, and one may notice 
that the effect is similar for some traits across categories, and 
to some extent even across traits. This indicates a general effect of our covariate, and hence we could 
construct a model that emphasizes such a general effect, by introducing apriori dependencies between the regression parameters of different species.

Finally, it is of interest to apply and extend our discrminant analysis method with partial reject options in order to analyze data sets where some observations are known to belong to several clusters \citep{latouche2011overlapping}. This requires an extended type of reward function, $R(\cI,\rmI)$, where both the classification $\cI$ and the true $\rmI$ are sets of categories.

\section*{Acknowledgements}
The authors are grateful for the data on \textit{Acrocephalus} warblers and Chiffchaffs
provided by Falsterbo Bird Observatory, in particular Lennart Karlsson, 
Bj\"orn Malmhagen and G\"oran Walinder, and for helpful methodological feedback from Felix Wahl and Mathias Lindholm. We thank the Museum of Natural History in Stockholm, Copenhagen and Tring. 

\appendix

\section{Model formulation, complete data} \label{ideal}
Appendices \ref{ideal}-\ref{graphs} contain further mathematical details about the Bayesian model and classification procedure defined in Sections \ref{modelformulation} and \ref{classification} respectively of the main article. In order to make the text self-contained, some details of the main article have been repeated. This Appendix \ref{ideal} contains, in full detail, the derivation of estimators and the posterior category weights for a model using perfectly observed data. Posterior distributions for Bayesian multivariate linear regression with homoscedasticity assumption is readily available in \citet[Sections 2.8 and 2.11]{rossi2012}, and we extend this to allow for heteroscedasticity.

Suppose we have $N$ different categories, contained in the set $\sfN = \{1,\ldots,N\}$, 
with prior probabilities $\pi = (\pi_1,\ldots,\pi_N)$. With full data we measure $q$ 
traits and $p$ covariates of each subject. Let $Y_{ijk}$ be the measurement of trait 
$k$ for subject $j$ in category $i$, where $1\leq i\leq N$, $1\leq j\leq n_i$, $1\leq k\leq q$ 
and $n_i$ is the number of subjects in category $i$. We assume that 
\begin{equation*}
Y_{ij} = (Y_{ij1},\ldots,Y_{ijq}) \sim \text{N}\left(m_{ij}, \bSigma_{ij}\right)
\end{equation*}
are independent random vectors having a multivariate normal distribution, with
\begin{equation*}
m_{ij} = (m_{ij1},\ldots,m_{ijq}) \qquad \text{and} \qquad \bSigma_{ij} = \left(\Sigma_{ijkl}\right)_{k,l=1}^q
\end{equation*}
being the mean vector and the covariance matrix of subject $j$ of category $i$. Let also
\begin{equation*}
x_{ij} = \left(1, x_{ij1},\ldots,x_{ijp}\right) = \left(x_{ijm}\right)_{m=0}^p
\end{equation*}
be the covariate vector of subject $j$ of category $i$. Trait vectors and covariate 
vectors of category $i$ are rows in the matrices $\mY_i = \left(Y_{i1}^\top,\ldots,Y_{in_i}^\top\right)^\top$ 
and $\mX_i= \left(x_{i1}^\top,\ldots,x_{in_i}^\top\right)^\top$ respectively. 
We now proceed by formulating a multivariate and multiple regression model 
\begin{equation} \label{matmodapp}
\mY_i = \mX_i\mB_i + \mE_i
\end{equation}
for category $i$, where $\mB_i=\left(B_{imk}; m=0,\ldots,p; k=1,\ldots,q\right)$ is 
the regression parameter matrix, whose first row consists of intercepts for 
the $q$ traits, $m_{ij}$ is the $j^{\text{th}}$ row of $\mX_i\mB_i$, and 
$\mE_i = \left(E_{i1}^\top,\ldots,E_{in_i}^\top\right)^\top$ is an error 
term matrix with independent rows $E_{ij} \sim\text{N}(0,\mathbf{\bSigma}_{ij})$. 

For use in the construction of a joint prior, and later the derivation 
of the marginal posterior distributions of the parameters, the vectorized 
form of our regression model is needed. Denote the operation of appending
columns of a matrix by $\text{vec}(\cdot)$ (we will also use the 
inverse operation $\text{vec}^{-1}(\cdot)$ on column vectors) and 
rewrite \eqref{matmodapp} as 
\begin{equation} \label{Uapp}
\mU_i = \text{vec}(\mY_i) = \mZ_i \beta_i + \text{vec}(\mE_i)
\end{equation}
with $\beta_i = \text{vec}(\mB_i)$. Denoting an identity matrix of rank $q$ 
with $\mI_q$ and using the matrix tensor product $\otimes$,
\begin{equation} \label{Zapp}
\mZ_i =  \mI_q \otimes \mX_i = \begin{pmatrix} \mX_i & 0 & \cdots & 0 \\
                        0 & \mX_i & \ddots & \vdots \\
                        \vdots & \ddots & \ddots & 0 \\
                        0 & \cdots & 0 & \mX_i \end{pmatrix}
\end{equation}
is a block-diagonal matrix with $q$ blocks along the diagonal.

Now suppose we have $A$ covariance classes $\alpha = 1,\ldots,A$ 
for category $i$ such that
\begin{equation} \label{covmatsapp}
\bSigma_{ij} = \bSigma^\alpha_i \quad \text{if } x_{ij} \in \cX^\alpha,
\end{equation}
where $\cX = \cX^1 \cup \ldots \cup \cX^A$ is a disjoint 
decomposition of the predictor space $\cX$. Assuming a prior $\text{N}\left(\left(b_{i0k},\ldots,b_{ipk}\right)^\top = b_{ik}, \bSigma_{\mB_i}\right)$ on each of the columns $k=1,\ldots,q$ of $\mB_i$, we obtain a prior 
$\text{N}\left(\left(b_{i1}^\top,\ldots,b_{iq}^\top\right)^\top = \beta_{i0}, \mI_q \otimes \bSigma_{\mB_i} = \bSigma_{\beta_i}\right)$ on $\beta_i$. 
Further, assuming prior independence of $\beta_i, \bSigma_i^1, \ldots,\bSigma_i^A$ and imposing an Inverse-Wishart 
distribution $\bSigma_i^\alpha \sim IW(\nu_0, \mathbf{V}_0)$ on the 
covariance matrices in \eqref{covmatsapp} for $\alpha = 1,\ldots,A$, we 
get the joint prior
\begin{equation} \label{jointpriorapp}
p(\beta_i, \bSigma_i^1, \ldots, \bSigma_i^A) = p(\beta_i)\prod_{\alpha=1}^A p(\bSigma_i^\alpha)
\end{equation}
for the parameters of category $i$.

\subsection{Estimation}

Let $\theta_i = \left(\mB_i, \bSigma^1_i,\ldots,\bSigma_i^A\right)$ represent 
all parameters of category $i$. In the following, we assume that 
$\theta_1, \ldots, \theta_N$ are independent random vectors with 
probability densities $p(\theta_1),\ldots,p(\theta_N)$ defined in 
\eqref{jointpriorapp}. Introducing dependencies is of course possible, 
and may be important for specific problems. This is briefly mentioned in Section \ref{discussion} of the main paper. From Bayes' Theorem we get an aposteriori density
\begin{align*}
p(\theta_i\mid \cD_i) &= p(\theta_i){\cal L} \left( {\cal D}_i \right)^{-1} \prod_{j=1}^{n_i} f\left(y_{ij}; x_{ij}, \theta_i\right)\\
&= p(\theta_i){\cal L} \left( {\cal D}_i \right)^{-1}\cL(\theta_i; \cD_i) \\
&\propto p(\theta_i)\cL(\theta_i; \cD_i)
\end{align*}
of $\theta_i$ given the complete training data set 
$\cD_i = \{(x_{ij}, Y_{ij}); \, j=1, \ldots ,n_i\} = \{\mX_i, \mY_i\}$ 
for category $i$. The function 
$\cL\left(\theta_i; \cD_i\right) = p\left(\mY_i \mid \mX_i, \theta_i\right)$ 
is the likelihood, whereas ${\cal L}({\cal D}_i)=p({\cal D}_i)$ is the marginal likelihood of category $i$. In the last step we removed the normalizing factor 
${\cal L}({\cal D}_i)^{-1}$, since it does not depend on $\theta_i$. The Maximum Aposteriori (MAP)-estimator of $\theta_i$ is
\begin{align*}
\theta^{(\text{MAP})}_i &= \argmax_{\theta_i} p(\theta_i\mid \cD_i) \\
&= \argmax_{\theta_i} p(\theta_i)\cL(\theta_i; \cD_i),
\end{align*}
whereas the Bayes' estimator of $\theta_i$ is 
\begin{align*}
\theta^{(\text{Bayes})}_i &= \E\left[\theta_i\mid \cD_i\right] \\
&= \int \! \theta_i p(\theta_i\mid \cD_i) \, \rmd \theta_i \\
&= {\cal L}\left({\cal D}_i\right)^{-1}\int \! \theta_i p(\theta_i)\cL(\theta_i; \cD_i) \, \rmd \theta_i.
\end{align*}
Finally, given a new observation $\cDnew = (x,Y)$, define the 
posterior probability of the new observation belonging to category $i$ as
\begin{align} \label{piapp}
p_i = \Prob (\rmI=i \mid \cD, \cDnew) = \frac{\pi_i\omega_i}{\pi_1\omega_1 + \ldots + \pi_N\omega_N},
\end{align}
where
\begin{align*}
\omega_i &= \int \! f(Y;x,\theta_i)p(\theta_i \mid \cD_i) \, \mathrm{d}\theta_i \\
&= {\cal L}({\cal D}_i)^{-1} \int \! f(Y;x,\theta_i)p(\theta_i) \mathcal{L}(\theta_i; \cD_i) \, \rmd\theta_i
\end{align*}
are the posterior category weights given $\cDnew$ for all categories, 
before the prior probabilities $\pi_i$ have been taken into account.

\subsection{Monte Carlo Approximations}

It is usually difficult to evaluate the normalizing constants 
${\cal L}({\cal D}_i)^{-1}$ for high-dimensional data sets, and hence also 
$\theta^{(\text{Bayes})}_i$ and $\om_i$. However, it is 
possible to estimate $\theta^{(\text{Bayes})}_i$ and 
$\om_i$ by Monte Carlo simulation, with
\begin{equation} \label{Bayesestapp}
\Bayes_i = \frac{1}{R_i}\sum_{r=1}^{R_i} \theta_{ir}
\end{equation}
and
\begin{equation} \label{MComegaYapp}
\hatom_i = \frac{1}{R_i}\sum_{r=1}^{R_i} f(Y;x,\theta_{ir})
\end{equation}
respectively, if $\theta_{i1},\ldots,\theta_{iR_i}$ are $R_i$ replicates 
drawn from the posterior distribution $p(\theta_i\mid \cD_i)$, 
with $\theta_{ir} = \left(\beta_{ir}, \bSigma_{ir}^1,\ldots,\bSigma_{ir}^A\right)$.

We will generate $\theta_{i1}, \ldots, \theta_{iR_i}$ by blockwise 
Gibbs sampling, and for this we need the conditional posterior 
distributions of $\beta_i$ and $\bSigma_i^\alpha$ for 
$\alpha=1,\ldots,A$. To derive those, we need some additional 
notation. Let $\mZ_i^\alpha$, $\mX_i^\alpha$, $\mY_i^\alpha$ 
and $\mU_i^\alpha$ denote the submatrices of 
$\mZ_i$, $\mX_i$, $\mY_i$ and $\mU_i$ corresponding to 
covariance class $\alpha$, and let $\mI_n$ be the identity matrix of order $n$. Recall also that $\mB_i = \text{vec}^{-1}(\beta_i)$, 
meaning that we know $\mB_i$ from $\beta_i$, 
and vice versa. For simplicity of notation we omit index $i$ in the following proposition:

\begin{prop} \label{prop1}
Denote the parameter vector of a Bayesian multivariate multiple 
regression model with $A$ covariance classes by
$\theta = \left(\beta, \bSigma^1,\ldots,\bSigma^A\right)$, where 
$\beta$ is the regression parameter vector and $\bSigma^1,\ldots,\bSigma^A$ 
are the $A$ covariance matrices. Let the prior of $\theta$ be 
$p(\theta) = p(\beta)\prod_{\alpha=1}^A p(\bSigma^\alpha)$, where 
$\beta \sim \text{N}(\beta_0,\bSigma_\beta)$ and 
$\bSigma^\alpha \sim IW(\nu_0, \mV_0)$ for $\alpha=1,\ldots,A$. 
Then the posterior distribution of $\beta \mid \mU, \mZ, \bSigma^1,\ldots,\bSigma^\alpha$ 
is $\text{N}(\tilde\beta, \tilde\bSigma)$, where
\begin{align*} 
\tilde{\bSigma} &= \left[ \bSigma_{\beta}^{-1} + \sum_{\alpha=1}^A (\bSigma^\alpha)^{-1} \otimes \left(\mX^\alpha\right)^\top \mX^\alpha \right]^{-1}
\end{align*} 
and
\begin{align*} 
\tilde{\beta} &= \tilde{\bSigma} \times \left[\bSigma_{\beta}^{-1}\beta_0 + \sum_{\alpha=1}^A \left(\left(\bSigma^\alpha\right)^{-1} \otimes \left(\mX^{\alpha}\right)^\top\right)\mU^{\alpha}\right].
\end{align*}
\end{prop}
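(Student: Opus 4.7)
The plan is to carry out a standard Bayesian conjugate update in the vectorized regression model, with the only real wrinkle being the presence of $A$ different covariance classes and the Kronecker structure coming from $\mZ = \mI_q \otimes \mX$. Throughout, I would condition on $\bSigma^1,\ldots,\bSigma^A$ being fixed and absorb any factors that do not depend on $\beta$ into the normalizing constant, so that I only need to track the quadratic form in $\beta$.

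First I would partition the data by covariance class. Conditional on $\bSigma^\alpha$, the rows of $\mE^\alpha$ are i.i.d.\ $\text{N}(0,\bSigma^\alpha)$, so using the column-stacking convention for $\text{vec}(\cdot)$, the vectorized errors satisfy $\text{vec}(\mE^\alpha) \sim \text{N}(0,\, \bSigma^\alpha \otimes \mI_{n^\alpha})$. Hence
\begin{equation*}
\mU^\alpha \mid \beta, \bSigma^\alpha \sim \text{N}\!\left(\mZ^\alpha \beta,\; \bSigma^\alpha \otimes \mI_{n^\alpha}\right),
\end{equation*}
and because observations in different covariance classes are conditionally independent, the full conditional log-likelihood of $\beta$ is the sum over $\alpha$ of the corresponding Gaussian log-densities.

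Next I would write $\log p(\beta \mid \mU,\mZ,\bSigma^1,\ldots,\bSigma^A)$ as $\log p(\beta) + \sum_\alpha \log p(\mU^\alpha \mid \beta, \bSigma^\alpha) + \text{const}$, which is a quadratic form in $\beta$. Expanding and collecting terms, the quadratic coefficient is
\begin{equation*}
\bSigma_\beta^{-1} + \sum_{\alpha=1}^A (\mZ^\alpha)^\top \bigl((\bSigma^\alpha)^{-1} \otimes \mI_{n^\alpha}\bigr)\mZ^\alpha,
\end{equation*}
and the linear coefficient involves $\bSigma_\beta^{-1}\beta_0 + \sum_\alpha (\mZ^\alpha)^\top ((\bSigma^\alpha)^{-1} \otimes \mI_{n^\alpha}) \mU^\alpha$. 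The key algebraic step is to reduce these using the Kronecker identities $(A\otimes B)^\top = A^\top \otimes B^\top$ and $(A\otimes B)(C\otimes D) = AC\otimes BD$ together with $\mZ^\alpha = \mI_q \otimes \mX^\alpha$. This yields
\begin{equation*}
(\mZ^\alpha)^\top \bigl((\bSigma^\alpha)^{-1} \otimes \mI_{n^\alpha}\bigr)\mZ^\alpha = (\bSigma^\alpha)^{-1} \otimes (\mX^\alpha)^\top \mX^\alpha
\end{equation*}
and
\begin{equation*}
(\mZ^\alpha)^\top \bigl((\bSigma^\alpha)^{-1} \otimes \mI_{n^\alpha}\bigr)\mU^\alpha = \bigl((\bSigma^\alpha)^{-1} \otimes (\mX^\alpha)^\top\bigr)\mU^\alpha,
\end{equation*}
which are exactly the pieces that appear in the stated $\tilde\bSigma$ and $\tilde\beta$.

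Finally, completing the square in $\beta$ shows that the conditional posterior density is proportional to $\exp\!\bigl(-\tfrac12(\beta - \tilde\beta)^\top \tilde\bSigma^{-1}(\beta-\tilde\beta)\bigr)$, so $\beta \mid \mU,\mZ,\bSigma^1,\ldots,\bSigma^A \sim \text{N}(\tilde\beta,\tilde\bSigma)$ with the claimed $\tilde\bSigma$ and $\tilde\beta$. The main obstacle is really only bookkeeping: keeping the vec-convention consistent so that the error covariance is $\bSigma^\alpha \otimes \mI_{n^\alpha}$ (rather than the transposed ordering), and applying the Kronecker mixed-product rule in the correct direction so that the class-$\alpha$ contribution collapses into a clean Kronecker factor. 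Once those identities are lined up, the result follows from a standard Gaussian conjugate update applied to a sum of $A+1$ independent quadratic forms in $\beta$.
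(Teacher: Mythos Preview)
Your proposal is correct and follows essentially the same approach as the paper: write the conditional posterior as a product of the Gaussian prior and the $A$ Gaussian likelihood factors, collect the quadratic and linear terms in $\beta$, and simplify the resulting expressions using the mixed-product property of the Kronecker product together with $\mZ^\alpha = \mI_q \otimes \mX^\alpha$. The only cosmetic difference is that the paper isolates the Kronecker simplification $(\mZ^\alpha)^\top(\bSigma^\alpha\otimes\mI_{n^\alpha})^{-1}\mZ^\alpha = (\bSigma^\alpha)^{-1}\otimes(\mX^\alpha)^\top\mX^\alpha$ into a separate lemma, whereas you carry it out inline; the underlying computation is identical.
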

\begin{proof}
By applying Bayes' theorem 
\begin{align*}
p\left( \beta \mid \mU, \mZ, \bSigma^1, \ldots \bSigma^A\right) &\propto \exp \left\{ -\frac{1}{2} (\beta - \beta_0)^\top \bSigma_{\beta}^{-1} (\beta - \beta_0) \right\} \cdot \nonumber \\
&\cdot \prod_{\alpha=1}^A \exp\left\{ -\frac{1}{2} \left(\mU^\alpha - \mZ^\alpha \beta\right)^\top \left(\bSigma^\alpha \otimes \mI_{n^\alpha} \right)^{-1} \left(\mU^\alpha - \mZ^\alpha \beta\right) \right\} \\
&= \exp \left\{ -\frac{1}{2} \beta \mC\beta + \beta \mD \right\}
\end{align*}
where $n^\alpha$ is the number of observations in covariance class $\alpha$,
\begin{align*}
\mC &= \bSigma_{\beta}^{-1} + \sum_{\alpha=1}^A (\mZ^\alpha)^\top (\bSigma^\alpha \otimes \mI_{n^\alpha})^{-1} \mZ^\alpha \\
&= \bSigma_{\beta}^{-1} + \sum_{\alpha=1}^A \left(\mZ^\alpha (\bSigma^\alpha \otimes \mI_{p+1})^{-1}\right)^\top \mZ^\alpha \\
&= \bSigma_\beta^{-1} + \sum_{\alpha=1}^A ((\bSigma^\alpha)^{-1}\otimes \mI_{p+1}) (\mZ^\alpha)^\top\mZ^\alpha\\
&= \bSigma_{\beta}^{-1} + \sum_{\alpha=1}^A \left(\bSigma^\alpha\right)^{-1} \otimes \left(\mX^{\alpha}\right)^\top\mX^{\alpha},
\end{align*}
where in the second step of the last equation we used Lemma \ref{lemma1} below, and 
\begin{align*}
\mD &= \bSigma_{\beta}^{-1}\beta_0 + \sum_{\alpha=1}^A (\mZ^\alpha)^\top (\bSigma^\alpha \otimes \mI_{n^\alpha} )^{-1} \mU^\alpha \\
&= \bSigma_{\beta}^{-1}\beta_0 + \sum_{\alpha=1}^A \left(\left(\bSigma^\alpha\right)^{-1} \otimes \left(\mX^{\alpha}\right)^\top\right)\mU^{\alpha}.
\end{align*}
Consequently,
\begin{equation*}
\beta \mid \mU, \mZ, \bSigma^1,\ldots,\bSigma^A \sim N(\tilde{\beta}, \tilde{\bSigma})
\end{equation*}
where
\begin{align*}
\tilde{\beta} &= \mC^{-1}\mD \\
&= \left[ \bSigma_{\beta}^{-1} + \sum_{\alpha=1}^A (\bSigma^\alpha)^{-1} \otimes \left(\mX^\alpha\right)^\top \mX^\alpha \right]^{-1} \\
&\times \left[\bSigma_{\beta}^{-1}\beta_0 + \sum_{\alpha=1}^A \left(\left(\bSigma^\alpha\right)^{-1} \otimes \left(\mX^{\alpha}\right)^\top\right)\mU^{\alpha}\right]
\end{align*}
and
\begin{equation*}
\tilde{\bSigma} = \mC^{-1} = \left[ \bSigma_{\beta}^{-1} + \sum_{\alpha=1}^A (\bSigma^\alpha)^{-1} \otimes \left(\mX^\alpha\right)^\top \mX^\alpha \right]^{-1}.
\end{equation*}
Notice that $\tilde{\beta}$ is a multivariate version of a weighted average of the prior vector $\beta_0$ and the least squares estimates of $\beta$ obtained from the $A$ covariance classes.
\end{proof}
The following lemma was used in the proof of Proposition \ref{prop1}. It admits a considerable gain in computational speed, when calculating the posterior covariance matrix $\tilde{\bSigma}$.
\begin{lemma} \label{lemma1}
Let $\mZ$ be a block-diagonal $nq \times (p+1)q$ matrix, 
where there are $q$ blocks $\mX$, which are $n \times (p+1)$-matrices, 
along the diagonal. Let $\bSigma$ be a symmetric, 
positive definite $q\times q$-matrix. Then it holds that
\begin{equation*}
\mZ^\top \left(\bSigma \otimes \mI_n\right)^{-1}\mZ = \left(\mZ\left(\bSigma \otimes \mI_{p+1}\right)^{-1}\right)^\top\mZ.
\end{equation*}
\end{lemma}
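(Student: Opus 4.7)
The plan is to reduce both sides of the claimed identity to the common expression $\bSigma^{-1} \otimes (\mX^\top \mX)$, using only standard Kronecker product identities. The block-diagonal structure of $\mZ$ described in \eqref{Z} is what makes this collapse possible.

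First, I would note that $\mZ = \mI_q \otimes \mX$. The three facts I would rely on are: the product rule $(A \otimes B)(C \otimes D) = (AC) \otimes (BD)$ whenever the indicated products are conformable, the transpose rule $(A \otimes B)^\top = A^\top \otimes B^\top$, and the inversion rule $(A \otimes B)^{-1} = A^{-1} \otimes B^{-1}$ when both inverses exist. In particular, positive definiteness of $\bSigma$ guarantees $(\bSigma \otimes \mI_m)^{-1} = \bSigma^{-1} \otimes \mI_m$ for any dimension $m$.

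For the left-hand side, two applications of the product rule give
\begin{equation*}
\mZ^\top(\bSigma \otimes \mI_n)^{-1}\mZ = (\mI_q \otimes \mX^\top)(\bSigma^{-1} \otimes \mI_n)(\mI_q \otimes \mX) = \bSigma^{-1} \otimes (\mX^\top \mX).
\end{equation*}
For the right-hand side, I would first simplify $\mZ(\bSigma \otimes \mI_{p+1})^{-1} = (\mI_q \otimes \mX)(\bSigma^{-1} \otimes \mI_{p+1}) = \bSigma^{-1} \otimes \mX$, then transpose and multiply by $\mZ$ to obtain $(\bSigma^{-1} \otimes \mX^\top)(\mI_q \otimes \mX) = \bSigma^{-1} \otimes (\mX^\top \mX)$, where I use that $\bSigma^{-1}$ is symmetric because $\bSigma$ is.

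The two expressions coincide, which closes the argument. There is no real obstacle here; the lemma is essentially a bookkeeping exercise in Kronecker algebra. The only point worth flagging is that the identity matrices on the two sides of the claimed equation differ in dimension ($n$ versus $p+1$), which is precisely why the rearrangement is computationally useful: inverting $\bSigma \otimes \mI_{p+1}$ is cheap compared with inverting $\bSigma \otimes \mI_n$ when $n \gg p+1$, which is the practical payoff alluded to just before the lemma's statement.
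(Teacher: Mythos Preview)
Your proof is correct and follows essentially the same route as the paper: both sides are reduced to $\bSigma^{-1} \otimes (\mX^\top \mX)$ via the mixed-product, transpose, and inversion rules for Kronecker products, using $\mZ = \mI_q \otimes \mX$ and the symmetry of $\bSigma$. The paper spells out a couple more intermediate equalities, but the argument is the same.
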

\begin{proof}
We prove the lemma by iterated use of the mixed-product 
property of the tensor product. Since $\mZ = \left(\mI_q \otimes \mX\right)$, 
the left hand side becomes
\begin{align*}
\left(\mI_q \otimes \mX\right)^\top \left(\bSigma \otimes \mI_n\right)^{-1} \left(\mI_q \otimes \mX\right) &= \left(\mI_q \otimes \mX^\top\right) \left(\bSigma^{-1} \otimes \mI_n\right) \left(\mI_q \otimes \mX\right) \\
&= \left(\mI_q\bSigma^{-1} \otimes \mX^\top\mI_n \right) \left(\mI_q \otimes \mX\right) \\
&= \left(\bSigma^{-1} \otimes \mX^\top \right) \left(\mI_q \otimes \mX\right) \\
&= \bSigma^{-1}\mI_q \otimes \mX^\top\mX \\
&= \bSigma^{-1} \otimes \mX^\top\mX \\
\end{align*}
and the right hand side becomes
\begin{align*}
\left(\left(\mI_q \otimes \mX\right)\left(\bSigma \otimes \mI_{p+1}\right)^{-1}\right)^\top \left(\mI_q \otimes \mX\right) &= \left(\mI_q\bSigma^{-1} \otimes \mX\mI_{p+1}\right)^\top \left(\mI_q \otimes \mX\right) \\
&= \left(\bSigma^{-1} \otimes \mX \right)^\top \left(\mI_q \otimes \mX\right) \\
\{\text{by symmetry of $\bSigma$}\}&= \left(\bSigma^{-1} \otimes \mX^\top \right) \left(\mI_q \otimes \mX\right) \\
&= \bSigma^{-1}\mI_q \otimes \mX^\top\mX \\
&= \bSigma^{-1} \otimes \mX^\top\mX \\
\end{align*}
which proves the lemma already in the third equalities.
\end{proof}
Using the vector form, we may express the 
conditional posterior of the regression parameters
\begin{equation*}
\beta_{i} \mid \mU_i, \left\{\bSigma_{i}^\alpha\right\}_{\alpha=1}^A \sim \text{N}(\tilde{\beta},\tilde{\bSigma})
\end{equation*}
where
\begin{equation*}
\tilde{\bSigma} = \left[ \bSigma_{\beta}^{-1} + \sum_{\alpha=1}^A \left(\bSigma_i^\alpha\right)^{-1} \otimes \left(\left(\mX_i^\alpha\right)^\top\mX_i^\alpha \right) \right]^{-1},
\end{equation*}
and 
\begin{align*}
\tilde{\beta} &= \tilde{\bSigma} \times \left[\bSigma_{\beta}^{-1}\beta_0 + \sum_{\alpha=1}^A \left(\left(\bSigma_i^\alpha\right)^{-1} \otimes \left(\mX_i^\alpha\right)^\top\right)\mU_i^\alpha \right].
\end{align*}
Meanwhile, the conditional posteriors of the covariance matrices are
\begin{align*}
\bSigma_i^\alpha \mid \mB_{i}, \mY_i^\alpha, \mX_i^\alpha &\sim IW(\nu_0 + n_i^\alpha, \mV_0 + \mS^\alpha_i),
\end{align*}
where $n_i^\alpha$ denotes the number of observations in the covariance class $\alpha$ for category $i$ and
\begin{align*}
\mS_i^\alpha = \left(\mY_i^\alpha - \mX_i^\alpha \mB_{i}\right)^\top\left(\mY_i^\alpha - \mX_i^\alpha \mB_{i}\right) 
\end{align*}

Having computed $\hatom_1,\ldots,\hatom_N$, for $\cDnew$, we may compute the Monte Carlo-estimated aposteriori probability of $\cDnew$ being in category $i$ as
\begin{equation*}
\hat{p}_i = \hat{\Prob}(I=i\mid \cD,\cDnew) = \frac{\pi_i\hatom_i}{\pi_1\hatom_1+\ldots+\pi_N\hatom_N},
\end{equation*}
where $\cD = \cD_1 \cup \ldots \cup \cD_N$ is the complete training data set.

\section{Model Formulation, obfuscated data} \label{real}
Overall our setup is the same as in Appendix \ref{ideal}, but we now suppose 
there is only partial information about the complete training data 
set $\cD$. Due to some obfuscation, which could be due to rounding, 
grouping, categorization or lost measurements of some traits, 
we only know that 
\begin{equation*}
Y_{ij} \in \sfS_{ij} = \sfS_{ij1}\times \cdots \times \sfS_{ijq},
\end{equation*}
i.e. the complete trait vector $Y_{ij}$ for subject $j$ of category 
$i$ is contained in a  hyperrectangle $\sfS_{ij}$, whose components 
are given by $\{\sfS_{ijk}\}_{k=1}^q$. The components are sets, ranging 
in possible size from singletons to infinite intervals of 
$\mathbb{R}$, and are given by
\begin{equation*}
\sfS_{ijk} = \begin{cases} Y_{ijk}, & k \notin \sfK_{ij}, \\
\left(c_{ijk},d_{ijk}\right], & k \in \sfK_{ij},
\end{cases}
\end{equation*}
where $\sfK_{ij}= \left\{k; 1\le k \le q; \, Y_{ijk} \text{ obfuscated}\right\}$.
As described in the main article, without loss of generality we assume that $z_{ijk}$, the mid point of $(c_{ijk},d_{ijk}]$ for finite sets, is integer-valued.

We can treat all types of obfuscations uniformly in the following way.
Suppose trait $k$ of subject $j$ of category $i$ is imperfectly 
observed, i.e. $k \in \sfK_{ij}$. Let $g_{k}$ be the number of 
categories of this trait, which we number as $0,1,\ldots,g_{k}-1$.
The observed category is $z_{ijk} \in \left\{0,1,\ldots,g_{k} - 1\right\}$, 
where $g_{k} =2$ for binary data and $g_{k} = \infty$ for count data. 
The corresponding side of $\sfS_{ij}$ is 
\begin{equation*}
\sfS_{ijk} = \begin{cases}
            \left(-\infty, \frac{1}{2}\right], & \text{if } z_{ijk} = 0, \\
            \left(z_{ijk} - \frac{1}{2}, z_{ijk} + \frac{1}{2}\right], & \text{if } 1\le z_{ijk} \le g_{k}-2, \\
            \left(g_{k}-\frac{3}{2}, \infty\right), & \text{if } z_{ijk} = g_{k} - 1.
          \end{cases}
\end{equation*}
We also write 
\begin{equation*}
Z_{ijk} = z(\sfS_{ijk}) = \begin{cases}
            0, & \text{if } \sfS_{ijk} = \left(-\infty, \frac{1}{2}\right], \\
            \frac{c_{ijk} + d_{ijk}}{2}, & \text{if $\sfS_{ijk}$ is bounded}, \\
            g_{k} - 1, & \text{if } \sfS_{ijk} = \left(g_{k}-\frac{3}{2}, \infty\right],
          \end{cases}
\end{equation*}
for the center point of a finite or half-open, infinite $\sfS_{ijk}$, 
whereas $z\left(\sfS_{ijk}\right) = Y_{ijk}$ when $Y_{ijk} = \sfS_{ijk}$ 
is perfectly observed. We may write the actually observed training 
data set as
\begin{equation*}
\cDobs = \left\{\left(x_{ij},\sfS_{ij}\right);\, i=1,\ldots,N,j=1,\ldots,n_i\right\}.
\end{equation*}

\subsection{Estimation}
Using $\cDobs_i = \{(x_{ij},\sfS_{ij}) ; j=1,\ldots,n_i\}$, the 
posterior distribution of $\theta_i$ becomes
\begin{align} \label{censpost}
p(\theta_i \mid  \cDobs_i) &= p(\theta_i)\cL \left( \cDobs_i \right)^{-1}\prod_{j=1}^{n_i} p(\sfS_{ij};x_{ij},\theta_{i}) \nonumber \\
 &= p(\theta_i) \cL \left( \cDobs_i \right)^{-1} \cL(\theta_i ; \cDobs_i),
\end{align}
where the normalizing factor 
\begin{equation*}
\cL\left(\cDobs_i \right) = \int \! p(\theta_i) \cL(\theta_i; \cDobs_i) \, \rmd\theta_i
\end{equation*}
is the marginal likelihood of category $i$, and 
\begin{align} \label{censlik}
p(\sfS_{ij},x_{ij};\theta_i) = \begin{cases} f(Y_{ij};x_{ij},\theta_i), & \sfK_{ij}=\emptyset, \\
\int_{\sfS_{ij}} \! f(y_{ij};x_{ij},\theta_i) \prod_{k\in \sfK_{ij}} \rmd y_{ijk}, & \sfK_{ij}\ne\emptyset.
\end{cases}
\end{align}
Thus, with perfect observations ($\sfK_{ij}=\emptyset$), we 
evaluate the density of the trait vector $f$ at the observed 
point $Y_{ij}$ and the model is exactly as specified in Appendix \ref{ideal}. 
Otherwise, we construct a $\left|\sfK_{ij}\right|$-dimensional 
integral over $f$ and the contribution to the likelihood is this 
integral, with the function evaluated exactly at the remaining perfectly 
observed traits, if any exist. In particular, if all traits are 
imperfectly observed, the integral is $q$-dimensional. We may 
approximate the integral in \eqref{censlik} by
\begin{align*}
\left|\sfS_{ij}\right| f\left(z(\sfS_{ij}), x_{ij}; \theta_i\right) = \prod_{k\in \sfK_{ij}} \left|\sfS_{ijk}\right| \cdot f\left(z(\sfS_{ij1}),\ldots,z(\sfS_{ijq}), x_{ij}; \theta_i\right)
\end{align*}
whenever all $\left|\sfS_{ijk}\right| < \infty$ for $k \in \sfK_{ij}$, 
which is the case when employing the trick with auxiliary categories, mentioned in Section \ref{realmod} of the main article.

Alternatively, since $p(\sfS_{ij},x_{ij};\theta_i)$ potentially contains integrals 
of a multivariate Gaussian density function and there in general 
is a lack of a CDF on closed form for this distribution, the integrals in \eqref{censlik} need to be solved numerically. However, in the case of $|\sfK_{ij}| = 1$, with $\sfK_{ij} = \{k\}$ and $\sfS_{ijk}=\left(c_{ijk}, d_{ijk}\right]$, the integral is univariate and thus\footnote{The notation with subscript $(-k)$ means dropping element $k$ from a vector; dropping row $k$ from a matrix when not being the last index of a matrix; and dropping column $k$ when being the last index.}
\begin{align} \label{unilik}
p(\sfS_{ij},x_{ij};\theta_i) &= f(Y_{ij(-k)}, x_{ij}; \theta_i) \bigg[ \Phi\left(\frac{d_{ijk}-m_{ijk}(y_{ij(-k)})}{\sigma_{ijk}}\right) \\
&- \Phi\left(\frac{c_{ijk}-m_{ijk}(y_{ij(-k)})}{\sigma_{ijk}}\right)\bigg], \nonumber
\end{align}
where $m_{ijk}(y_{ij(-k)})=m_{ijk}+\bSigma_{ijk(-k)}\bSigma_{ij(-k)(-k)}^{-1}(y_{ij(-k)}-m_{ij(-k)})$ 
is the conditional expectation of $Y_{ijk}$ given that 
$Y_{ij(-k)} = (Y_{ijk^\prime};\, k^\prime\ne k) = y_{ij(-k)}$, 
$\sigma_{ijk}=\sqrt{\bSigma_{ijkk}-\bSigma_{ijk(-k)}\bSigma_{ij(-k)(-k)}^{-1}\bSigma_{ij(-k)k}}$ 
is the conditional standard deviation of $Y_{ijk}$ given any value of $Y_{ij(-k)}$, 
and $\Phi$ is the CDF of the univariate Gaussian distribution with mean 0 
and standard deviation 1.

Using $\cDobs_i$, we find from \eqref{censpost} that the estimators $\theta^{(\text{MAP})}_i$ and $\theta^{(\text{Bayes})}_i$ are
\begin{align*}
\theta^{(\text{MAP})}_i &= \argmax_{\theta_i} p(\theta_i \mid  \cDobs_i) \nonumber \\
 &=  \argmax_{\theta_i} p(\theta_i) \cL(\cDobs_i ; \theta_i)
\end{align*}
and
\begin{align} \label{Bayescens}
\theta^{(\text{Bayes})}_i &= \E\left[\theta_i \mid  \cDobs_i\right] = \int \! \theta_i p(\theta_i \mid  \cDobs_i) \, \rmd\theta_i \nonumber \\
 &= \cL \left( \cDobs_i \right)^{-1} \int \! \theta_i p(\theta_i) \cL(\cDobs_i ; \theta_i) \, \rmd\theta_i
\end{align}
respectively. Furthermore, redefining $\cDnew := (x,\sfS)$ for a 
new observation, where $\sfS=\sfS_1\times \ldots \times \sfS_q$, and denoting the corresponding set of imperfectly observed traits by $\sfK$, leads 
to the posterior category weights
\begin{align}
\omega_i &= \iint_{\sfS} \! f(y;x,\theta_i) \prod_{k\in \sfK} \mathrm{d}y_{k} \, p(\theta_i \mid  \cDobs_i) \, \mathrm{d}\theta_i \nonumber \\
 &= \cL \left( \cDobs_i \right)^{-1} \iint_{\sfS} \! f(y;x,\theta_i) \prod_{k\in \sfK} \mathrm{d}y_{k} \, p(\theta_i) \cL (\cDobs_i ; \theta_i) \, \mathrm{d}\theta_i \label{omegaS}
\end{align}
of this observation.

\subsection{Monte Carlo Approximations} \label{realMCapprox}
The integral over $\sfS$ in \eqref{omegaS} is, as mentioned in 
conjunction with \eqref{unilik}, potentially impossible to 
compute analytically, but could also be well behaved. We 
can in theory approximate $\theta^{(\text{Bayes})}_i$ in \eqref{Bayescens} as in 
\eqref{Bayesestapp} by sampling $\theta_i$ from $p(\theta_i \mid  \cDobs)$ 
a total of $R_i$ times. However, this entails a large number of 
numerical evaluations of integrals, see \eqref{censpost}-\eqref{censlik}. 
Similarly, we may estimate $\omega_i$ for $1\le i \le N$ in \eqref{omegaS} through
\begin{equation} \label{omhat}
\hat{\omega}_i = \frac{1}{R_i} \sum_{r=1}^{R_i} \int_{\sfS} f(y;x,\theta_{ir}) \prod_{k\in \sfK} \rmd y_{k},
\end{equation}
which in addition to previously presented integrals, involves 
computation of an integral over $\sfS$. As an alternative way 
of computing \eqref{Bayescens} and \eqref{omhat}, we also 
present an approach where complete data is sampled, based on 
the obfuscated data, as one step of the Monte Carlo algorithm, 
whereas the parameters are sampled as another step of the same algorithm. 
This allows us to estimate all $\theta^{(\text{Bayes})}_i$ and $\omega_i$ 
under widespread obfuscation, given that we are able to 
simulate $\mY_i$, $i=1,\ldots,N$. Overall, we want to generate
\begin{equation} \label{paramsetapp}
\left\{\theta_{ir}, Y_{ijkr}, \, 1\le j\le n_i, \, k\in\sfK_{ij}; Y_{kr}, k\in\sfK\right\}_{r=1}^{R_i}
\end{equation}
from
\begin{equation}\label{sampdens}
p(\theta_i|\cDobs_i)\prod_{j=1}^{n_i} f\left(y_{ij\sfK_{ij}r} \mid x_{ij}, \sfS_{ij},\theta_i\right) f\left(y_{\sfK r}, \mid x, Y_{\sfK^\complement};\theta_i\right)
\end{equation}
where $\theta_{ir} = \left(\beta_{ir}, \bSigma_{ir}^1,\ldots,\bSigma_{ir}^A\right)$, 
$y_{ij\sfK_{ij}r} = \left(y_{ijkr}; k\in \sfK_{ij}\right)$, 
$y_{\sfK r}=\left(y_{kr} ; k\in\sfK\right)$ and $Y_{\sfK^\complement} = \left(Y_k ; k\notin \sfK\right)$. 
Note that we do not condition on $\sfS$ in the conditional density of 
the unobserved traits for the new observation that we want to classify, as this would introduce a bias in the Monte Carlo estimate of $\omega_i$ below.
 
The details of the specific Gibbs sampling approach we use are presented in Appendix \ref{MCapp}. 
Having generated a sample $\theta_{i1},\ldots,\theta_{iR_i}$, we may compute 
the estimated category weights of $\cDnew$ as
\begin{equation} \label{MComega}
\hat{\omega}_i = \frac{1}{R_i} \sum_{r=1}^{R_i} f\left(Y_{\sfK^\complement}; x , \theta_{ir}\right) \I_{\left\{Y_{\sfK r} \in \bigtimes_{k\in\sfK} \sfS_k \right\}},
\end{equation}
where $Y_{\sfK^\complement}$ is as above, and $Y_{\sfK r} = \left(Y_{kr} ; k\in \sfK\right)$. 
For every $\theta_{ir}$, one could generate many $Y_{\sfK}$ and 
replace the indicator with an average of the indicators for each 
sampled $Y_{\sfK}$.

A potentially more efficient method would be to define $\{y_t\}_{t=1}^T$, 
where $y_t= (y_{t1},\ldots,y_{tq})$ with $y_{t\sfK^\complement} = Y_{\sfK^\complement}$ 
and $y_{t\sfK} \in \bigtimes_{k\in\sfK} \sfS_k$, in such a way 
that $\{y_{tk}, k \in \sfK\}$ is a grid approximation of $\bigtimes_{k\in\sfK} \sfS_k$. 
Then we can estimate $\omega_i$ through
\begin{equation} \label{MComega3}
\hat{\omega}_i = \frac{\prod_{k\in\sfK} \lvert \sfS_k \rvert}{TR_i} \sum_{r=1}^{R_i}\sum_{t=1}^T f\left(y_t ; x, \theta_{ir}\right).
\end{equation}
If we use the trick with auxiliary categories described in Section \ref{realmod} of the main article, we can we can choose $y_t$ 
uniformly at random on $\bigtimes_{k\in\sfK} \sfS_k$, as long as we 
do not have any missing observations, since those are represented 
with an infinite interval. Thus, \eqref{MComega3} is potentially 
more effcient than \eqref{MComega}, but comes at a cost of generality, 
since \eqref{MComega} is applicable to any new observation.

Finally, the Monte Carlo-estimated aposteriori probability of $\cDnew = (x,\sfS)$ being of category $i$ is
\begin{equation} \label{predprobapp}
\hat{p}_i = \hat{\Prob}(I=i \mid  \cDnew, \cDobs) = \frac{\pi_i\hat{\omega}_i}{\pi_1\hat{\omega}_1 + \ldots + \pi_N\hat{\omega}_N}
\end{equation}
and we may apply \eqref{omegaS} with replacement of $\omega_i$ by 
\eqref{MComega} for prediction. If \eqref{MComega3} is inserted into 
\eqref{predprobapp} we notice that $\prod_{k\in\sfK} \lvert \sfS_k \rvert$ 
and $T$ cancel out, and in case $R_i = R$ for $i=1,\ldots,N$, also $R$ cancels out.

\section{Gibbs sampling details} \label{MCapp}

The focus of this appendix is Procedure \ref{MCalg2}, in which 
we describe in detail how to generate a sample of size $R_i$ 
from the posterior distribution of the parameter vector 
$\theta_i$, using blockwise Gibbs sampling. It describes the 
general case, i.e. when we have obfuscated trait measurements 
in $\cDobs$. For the case with perfectly observed trait 
measurements, we skip the sampling of $\mY_i$ and use the 
observed values instead, otherwise the procedure is the same. 
Applying the procedure to data from each category $i$ will 
yield all the samples we need from the posterior distribution in order
to perform classification.

In Procedure \ref{MCalg2}, $TN(\mu,\bSigma,\sfS)$ refers to 
the truncated Gaussian distribution, where $\mu$ is the mean 
vector, $\bSigma$ is the covariance matrix and $\sfS$ 
is a hyper rectangle specifying the truncation limits. Simulating from this 
distribution can be done exactly using rejection sampling, 
or approximately using an inner Gibbs algorithm. Depending 
on application, either approach can be preferred, as the 
tradeoff is exact sampling versus efficiency. Also, more 
advanced algorithms such as Importance Sampling-techniques 
can be used in this step.

\begin{algorithm}
\caption{The Monte Carlo approach to sampling the parameters' posterior distribution under obfuscation.}
\label{MCalg2}
\begin{algorithmic}
\REQUIRE{$\cDobs, \nu_0, \mV_0, \beta_{i0} = \text{vec}(\mB_{i0})$}
\ENSURE{A sample of size $R_i$ from the posterior distribution of $\theta_i$.}
\FOR {$\alpha=1 \to A$}
\STATE draw $\bSigma_{i0}^\alpha \sim IW\left(\nu_0, \mV_0\right)$
\ENDFOR
\STATE draw $\beta_{i0} \sim \text{N}\left(\beta_{i0}, \bSigma_{\beta_i}\right)$
\STATE $\theta_{i0} \leftarrow \left(\beta_{i0}, \bSigma_{i0}^1,\ldots,\bSigma_{i0}^A\right)$
\FOR{$r=1 \to R_i$}
\FOR {$j=1 \to n_i$}
\STATE draw $\mY_{ij,r-1} \mid x_{ij}, \sfS_{ij}, \theta_{i,r-1} \sim TN\left(\mX_{ij}\mB_{i,r-1},\bSigma_{ij,r-1}, \sfS_{ij}\right)$
\ENDFOR
\STATE $\mU_{i,r-1} \leftarrow \text{vec}(\mY_{i,r-1})$
\STATE draw $\beta_{ir} \mid \mU_{i(r-1)}, \left\{\bSigma_{i(r-1)}^\alpha\right\}_{\alpha=1}^A \sim \text{N}\left(\tilde{\beta}, \tilde{\bSigma}\right)$
\FOR {$\alpha = 1 \to A$}
\STATE draw $\bSigma_{ir}^\alpha \mid \mU_{i(r-1)}^\alpha, \mX_i^\alpha, \beta_{ir} \sim IW(\nu_0 + n_i^\alpha, \mV_0 + \mS_i^\alpha)$
\ENDFOR
\STATE $\theta_{ir} \leftarrow \left(\beta_{ir}, \bSigma_{ir}^1,\ldots,\bSigma_{ir}^A\right)$
\STATE save $\theta_{ir}$
\ENDFOR
\end{algorithmic}
\end{algorithm}
\clearpage

\begin{landscape}
\section{Highest posterior density intervals of parameters} \label{hpdint}
\setlength\tabcolsep{0.25em}
\begin{table}[H]
\centering
\caption{Quantiles of the posterior distribution for the parameters of the \textit{Acrocephalus} model. The values in the \textit{ad}-columns are the change in intercept for adult birds in the \textit{Regression parameter values}-section, and separate values in the other sections. We have chosen to present the correlation between traits instead of the covariances. Notice the very slim posterior density intervals for \textit{Reed Warbler} and \textit{Marsh Warbler}, since we have lots of observations for these species, and the widely dispersed posterior distributions for Paddyfield Warbler, being a species with very few observations and a vague prior on the covariance matrices.}
{\footnotesize
\begin{tabular}{rrrrrrr|rrrrrr|rrrrrr}
& \multicolumn{6}{c}{Regression parameter values} & \multicolumn{6}{c}{Variance estimates} & \multicolumn{6}{c}{Correlation estimates} \\
& \multicolumn{2}{c}{Wing} & \multicolumn{2}{c}{Notch} & \multicolumn{2}{c}{Notch pos.} & \multicolumn{2}{c}{Wing} & \multicolumn{2}{c}{Notch} & \multicolumn{2}{c}{Notch pos.}& \multicolumn{2}{c}{Wing-N.} & \multicolumn{2}{c}{Wing-N.p.} & \multicolumn{2}{c}{N. - N.p.}\\
 Quantile \vline & juv & ad& juv & ad & juv & ad & juv & ad & juv & ad & juv & ad& juv & ad & juv & ad & juv & ad \\
  \hline
 \multirow{3}{*}{Reed warbler}
2.5\% \vline &66.69&0.7&10.99&1.27&107.87&1.95&2.31&2.61&0.57&0.55&1.28&1.3&0.46&0.33&-0.09&0.00&0.46&0.29\\
  50\% \vline &66.7&0.73&11.07&1.38&108&2.29&2.35&2.67&0.63&0.62&1.48&1.59&0.52&0.41&0.02&0.20&0.54&0.45\\
  97.5\% \vline &66.72&0.76&11.13&1.52&108.13&2.56&2.38&2.74&0.71&0.72&1.66&2.35&0.57&0.49&0.13&0.37&0.61&0.60\\
  \hline
\multirow{3}{*}{Marsh warbler}
2.5\% \vline &69.99&0.57&9.42&0.54&104.86&0.39&2.07&2.28&0.41&0.47&0.69&0.5&0.37&0.36&-0.13&-0.12&0.23&0.03\\
  50\% \vline &70.05&0.69&9.45&0.61&104.96&0.66&2.19&2.51&0.44&0.53&0.8&0.76&0.40&0.43&-0.03&0.16&0.33&0.28\\
  97.5\% \vline &70.1&0.81&9.48&0.68&105.04&1&2.32&2.76&0.47&0.6&0.92&1.16&0.44&0.49&0.06&0.41&0.42&0.49\\
  \hline
\multirow{3}{*}{Paddyfield warbler}
2.5\% \vline &56.64&-0.74&12.24&0.4&114.8&0.05&1.28&2.36&0.49&0.76&0.77&0.68&-0.20&-0.08&-0.58&-0.11&-0.13&-0.03\\
  50\% \vline &57.27&0.36&12.66&1.16&115.32&0.81&2.17&4.32&0.83&1.35&1.3&1.24&0.19&0.37&-0.25&0.35&0.27&0.42\\
  97.5\% \vline &57.89&1.47&13.1&1.89&115.83&1.61&4&9.07&1.52&2.77&2.46&2.6&0.53&0.69&0.15&0.68&0.59&0.72\\
  \hline
  \multirow{3}{*}{Blyth's reed warbler}
2.5\% \vline &61.84&-0.48&12.23&0.8&113.23&0.63&1.31&1.34&0.47&0.61&0.6&0.87&0.07&0.29&-0.18&-0.17&-0.09&0.09\\
  50\% \vline &62.26&0.03&12.49&1.13&113.53&1.03&1.94&1.85&0.69&0.83&0.9&1.2&0.35&0.49&0.12&0.07&0.21&0.32\\
  97.5\% \vline &62.69&0.56&12.75&1.46&113.85&1.43&3.04&2.64&1.08&1.17&1.37&1.71&0.58&0.64&0.40&0.30&0.47&0.51\\
\end{tabular}}
\label{tab:quantiles}
\end{table}
\end{landscape}

\section{Visualized decision regions} \label{graphs}
All of these visualizations are done using the same model fit and the same generated new observations from the posterior predictive distribution as in Section \ref{classification} of the main text. As a reminder, we used the values $\rho=0.1$ and $\tau = 0.001$ for the tuning parameters of the classifier.

\begin{figure}[h!]
  \centering
  \begin{tabular}{cc}
    \includegraphics[width=0.5\linewidth]{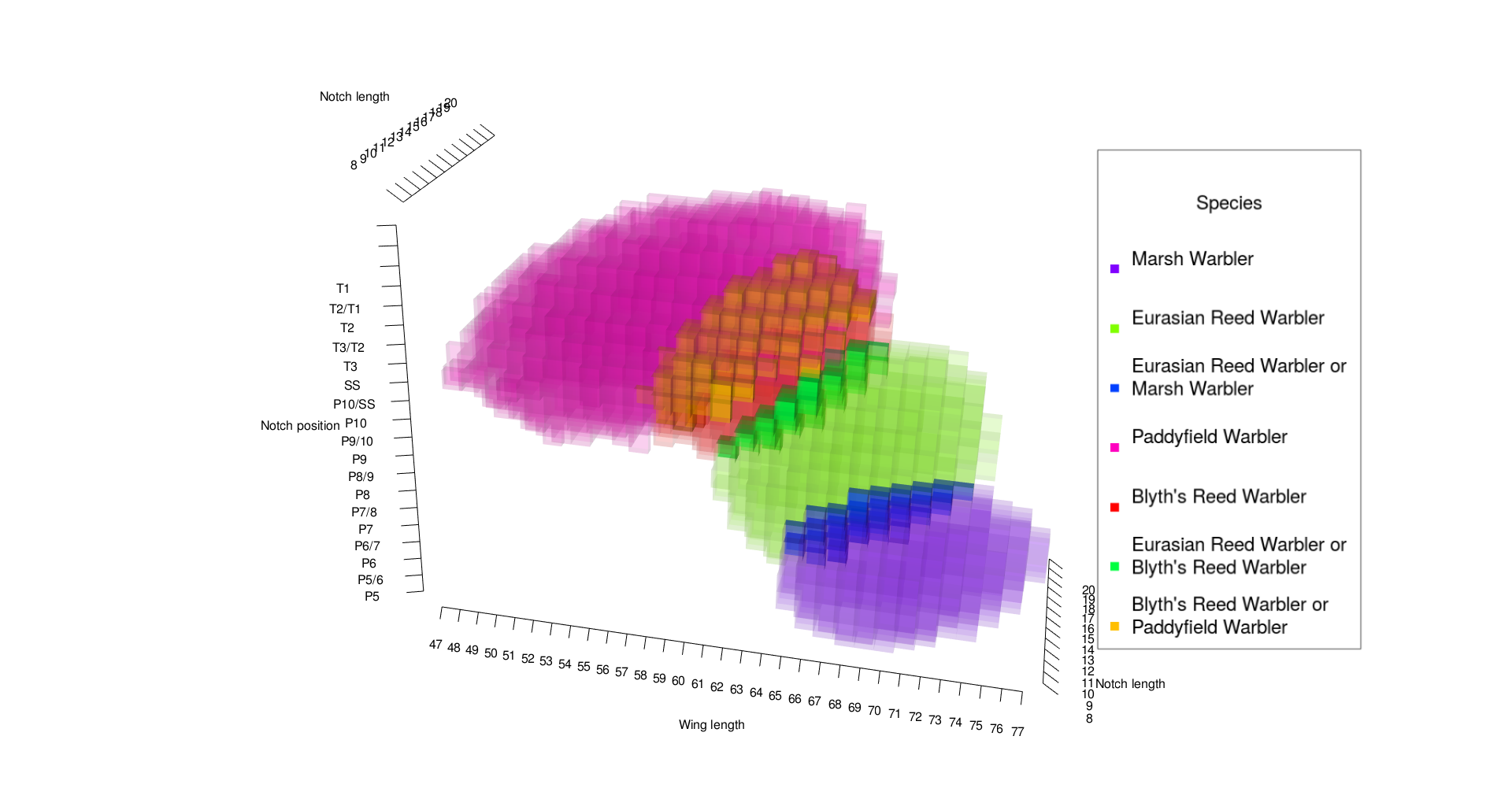} & 
    \includegraphics[width=0.5\linewidth]{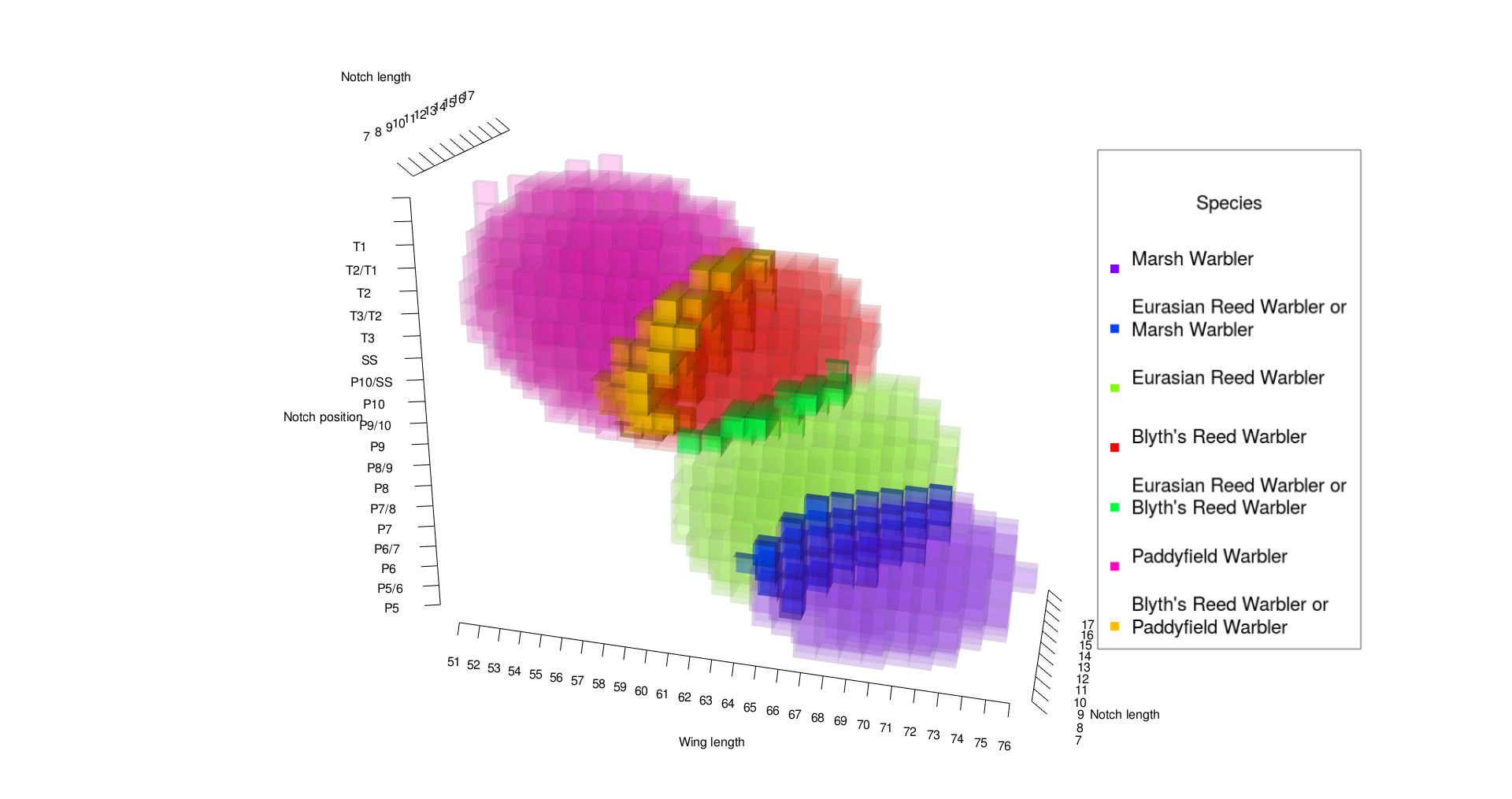} \\ [\abovecaptionskip]
    \small (a) Adult birds. & 
    \small (b) Juvenile birds.
  \end{tabular}
  \caption{Decision regions when observing all three traits of the \textnormal{Acrocephalus} 
  warblers. Completely transparent blocks represent observations that will be classified as outliers, i.e. not get any species assigned to them. The indecisive region 
  $\Lambda$ is less transparent, and colored according to which species there are uncertainty about. The probability of observing an individual that belongs to the indecisive region is $0.0820$ for (a) and $0.0791$ for (b), when each species is equally likely apriori to occur. The 
  decision region of \textnormal{Paddyfield Warbler} partially engulfs \textnormal{Blyth's Reed Warbler}
  for adult birds, reflecting the large uncertainty in the parameter estimates 
  for adult \textnormal{Paddyfield Warblers}. Notice also that we introduce unnamed categories 
  for \textnormal{notch position}, as the predictive posterior distribution requires this.}
  \label{acrocephalus}
\end{figure}

\begin{figure}[h!]
  \centering
  \begin{tabular}{cc}
    \includegraphics[width=0.5\linewidth]{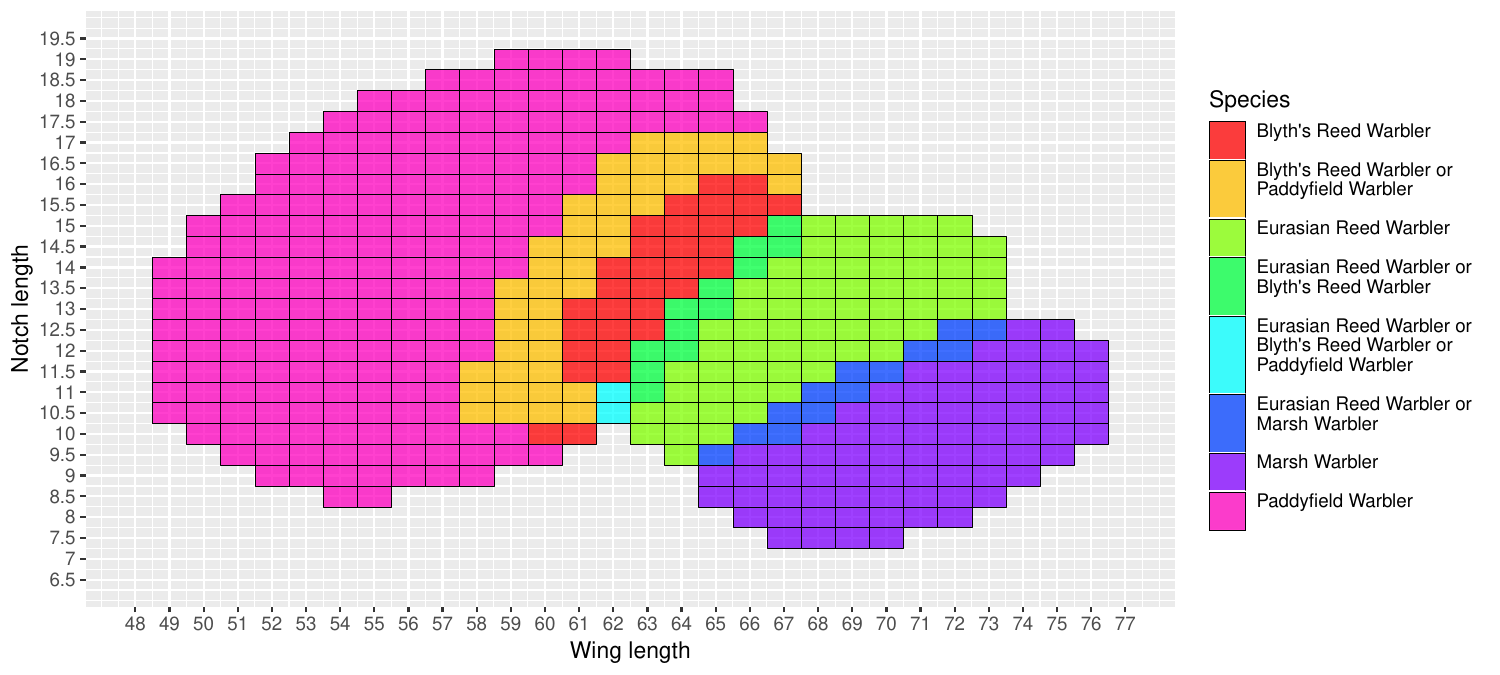} & 
    \includegraphics[width=0.5\linewidth]{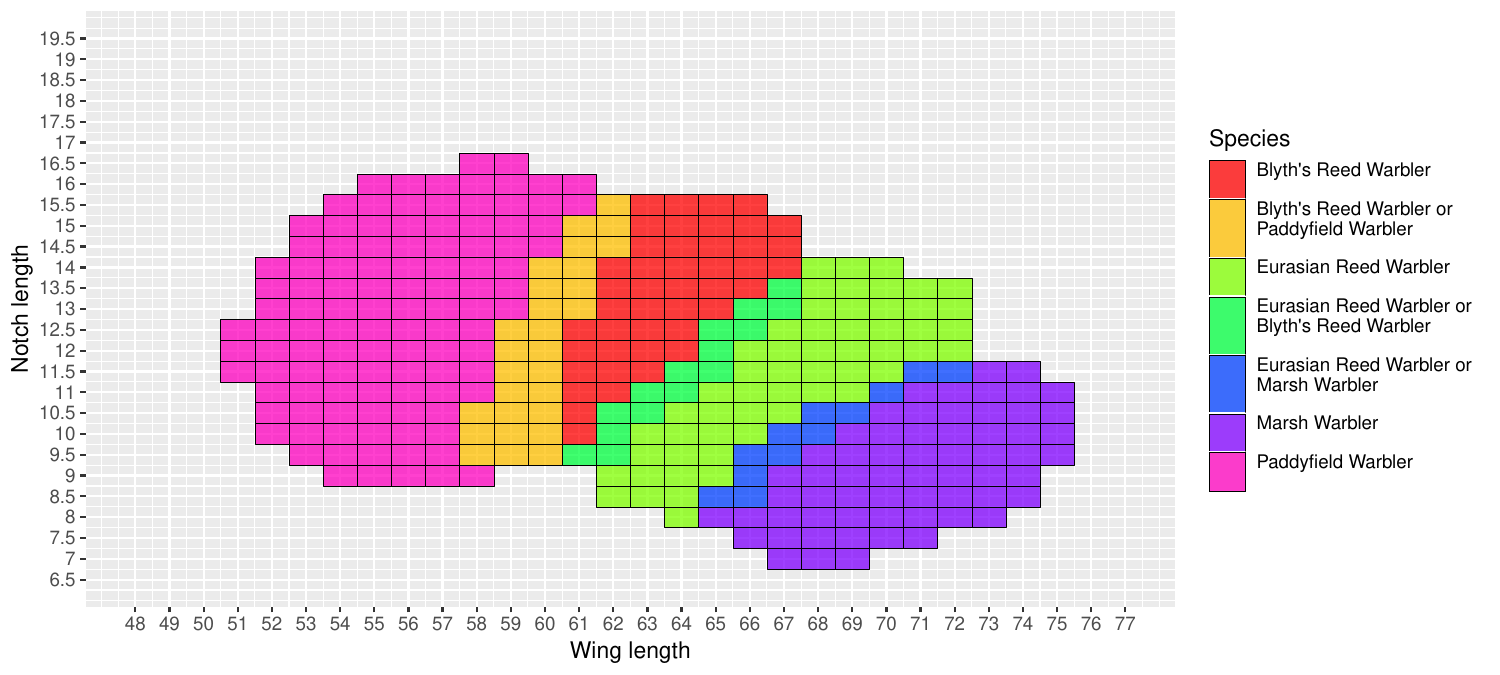} \\ [\abovecaptionskip]
     \small (a) Adult birds. & 
     \small (b) Juvenile birds.
  \end{tabular}
  \caption{Decision regions when only observing wing and notch length.}
\end{figure}

\begin{figure}[h!]
  \centering
  \begin{tabular}{cc}
    \includegraphics[width=0.5\linewidth]{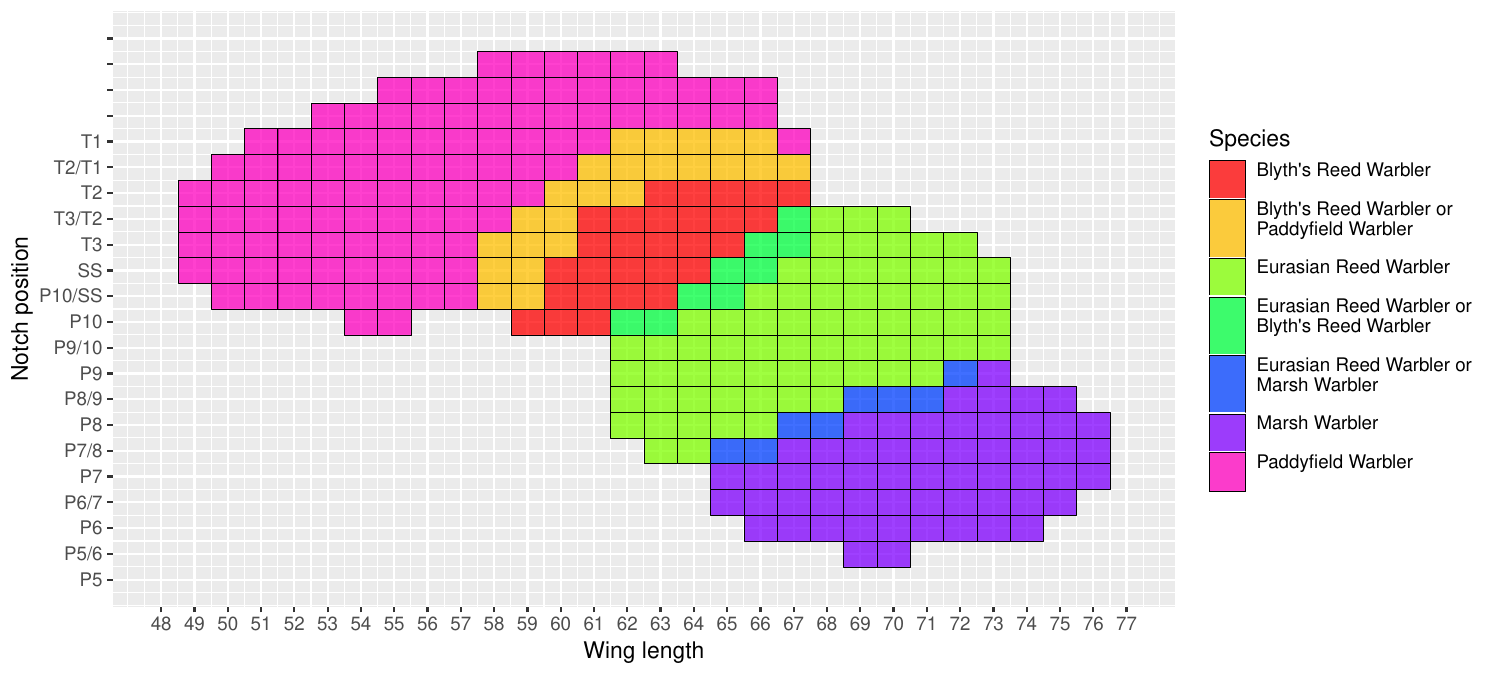} & 
    \includegraphics[width=0.5\linewidth]{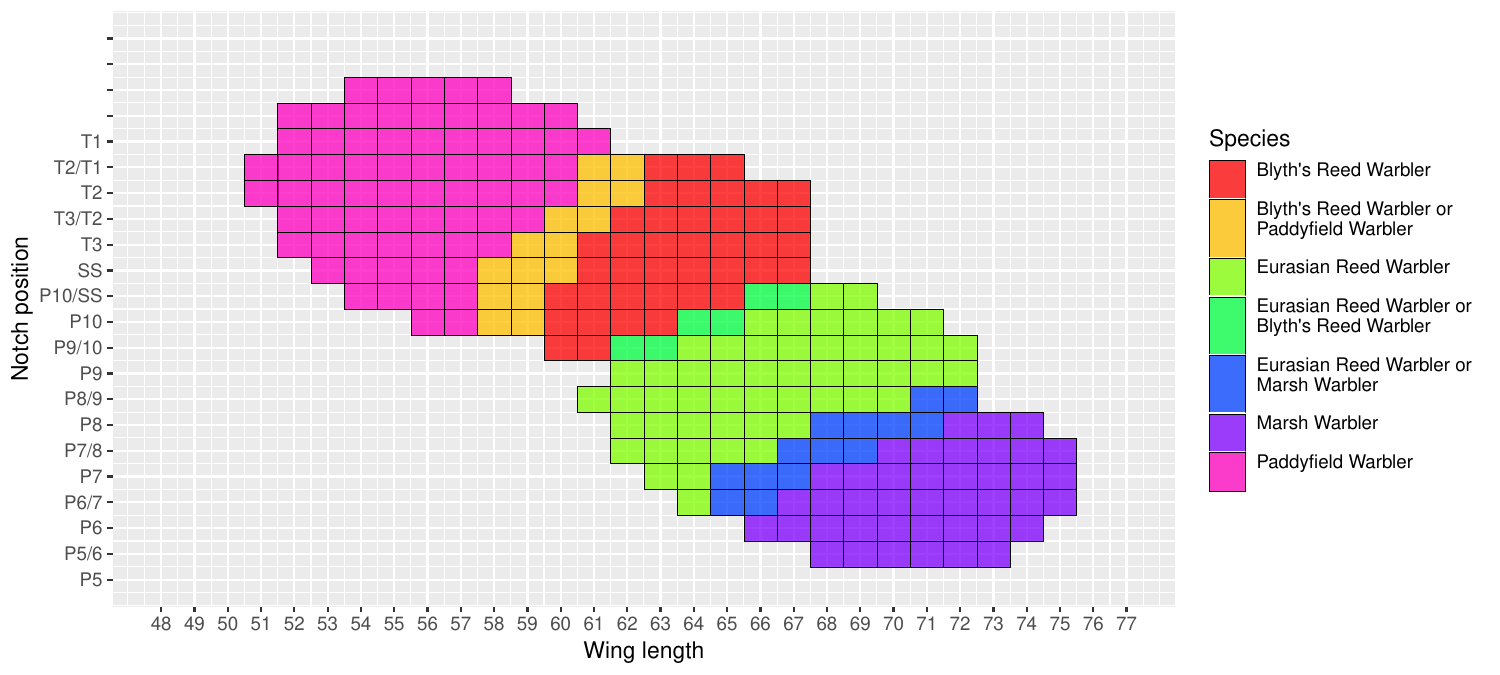} \\ [\abovecaptionskip]
     \small (a) Adult birds. & 
     \small (b) Juvenile birds.
  \end{tabular}
  \caption{Decision regions when only observing wing length and notch position.}
\end{figure}

\begin{figure}[h!]
  \centering
  \begin{tabular}{cc}
    \includegraphics[width=0.5\linewidth]{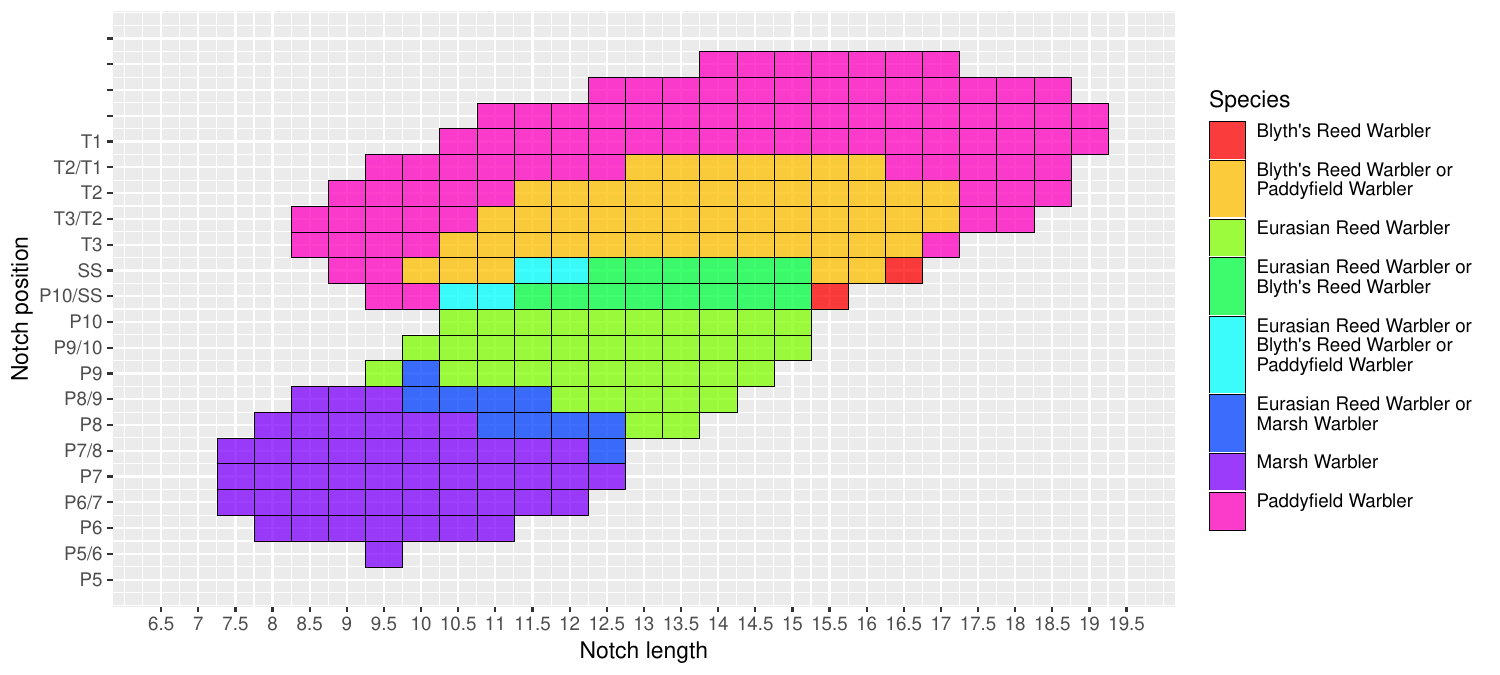} & 
    \includegraphics[width=0.5\linewidth]{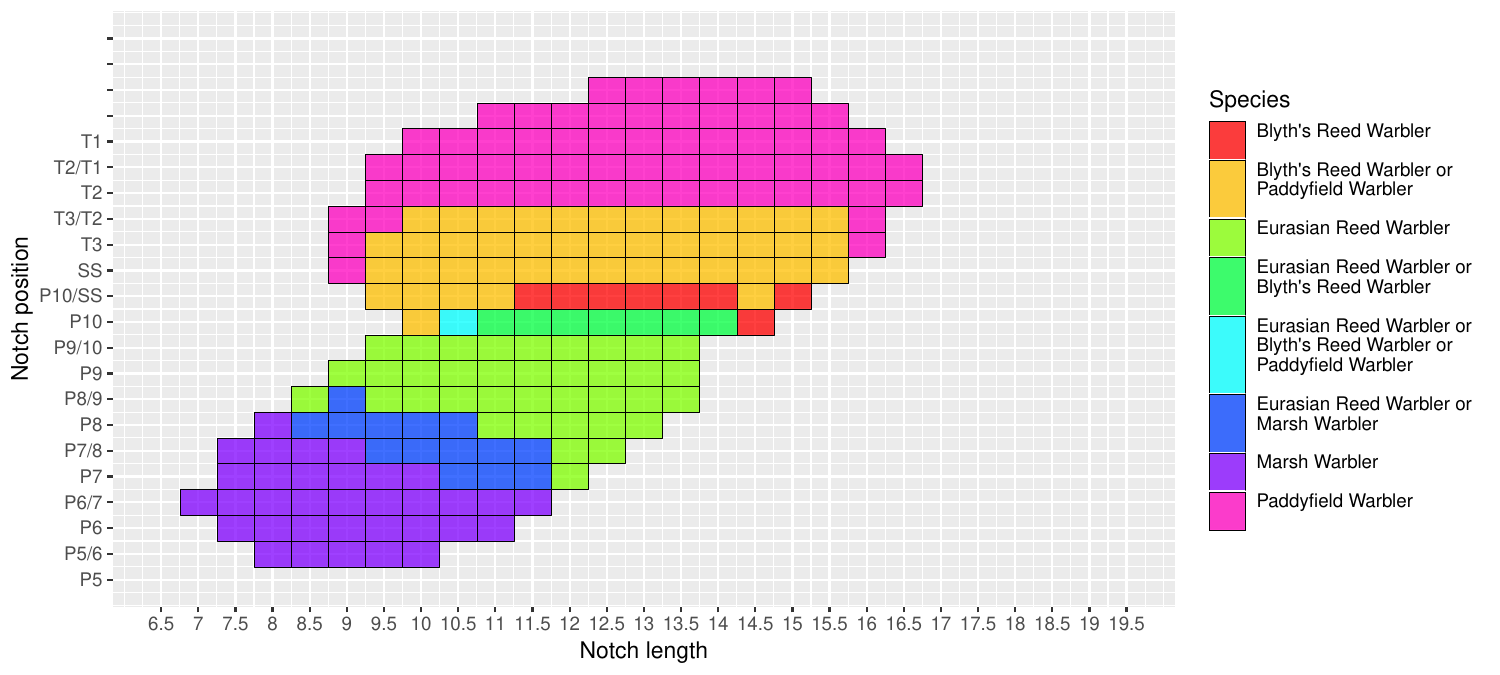} \\ [\abovecaptionskip]
     \small (a) Adult birds. & 
     \small (b) Juvenile birds.
  \end{tabular}
  \caption{Decision regions when only observing notch length and notch position.}
  \label{notchnf}
\end{figure}

\begin{figure}[h!]
  \centering
  \begin{tabular}{cc}
    \includegraphics[width=0.5\linewidth]{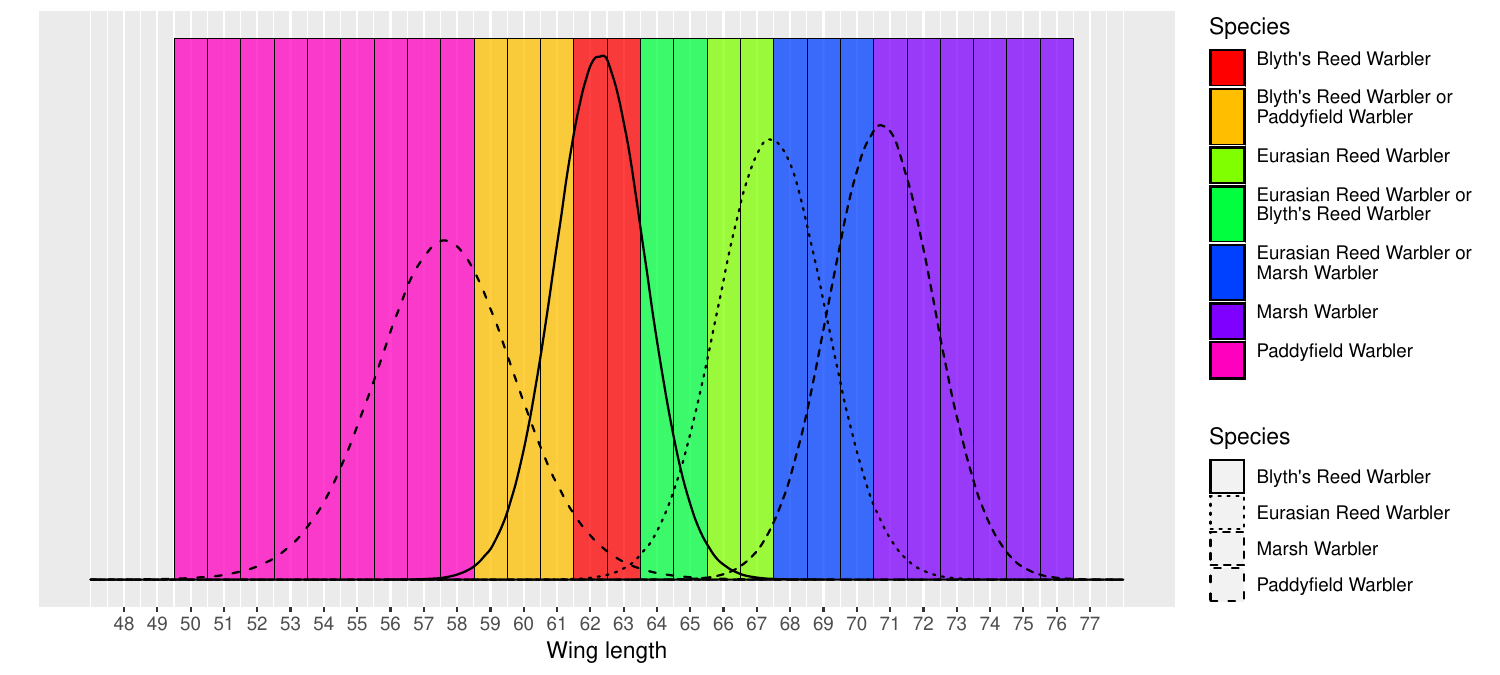} & 
    \includegraphics[width=0.5\linewidth]{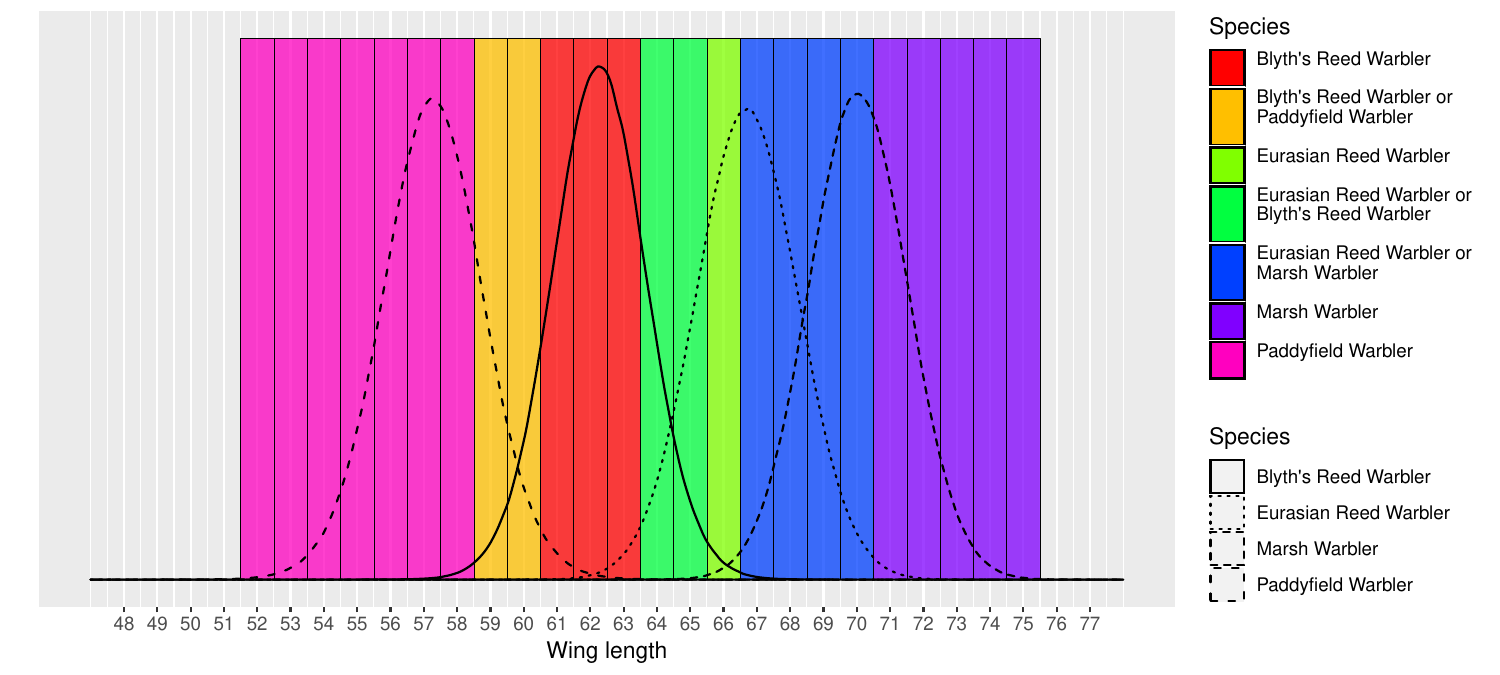} \\ [\abovecaptionskip]
     \small (a) Adult birds. & 
     \small (b) Juvenile birds.
  \end{tabular}
  \vspace{\floatsep}
  \begin{tabular}{cc}
    \includegraphics[width=0.5\linewidth]{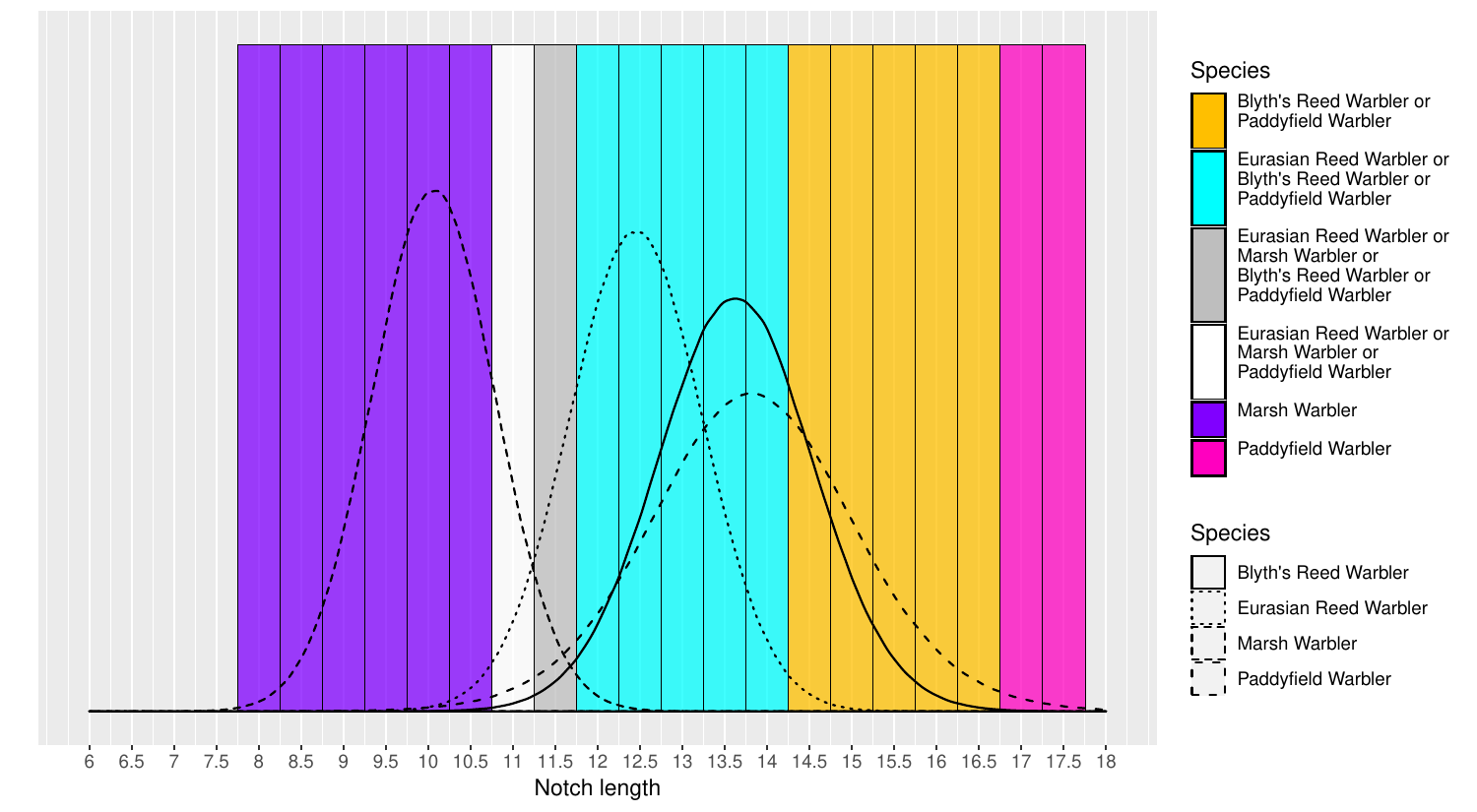} & 
    \includegraphics[width=0.5\linewidth]{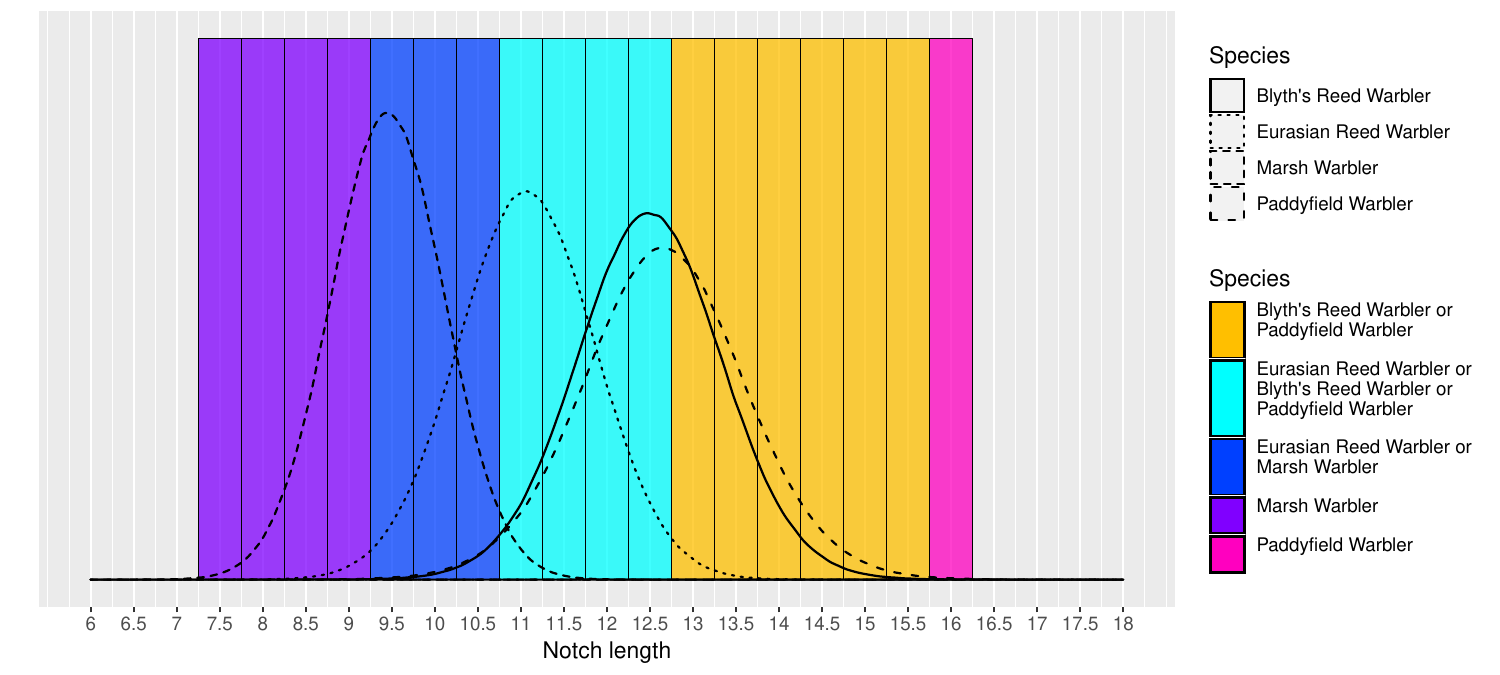} \\ [\abovecaptionskip]
     \small (c) Adult birds. & 
     \small (d) Juvenile birds.
  \end{tabular}
  \vspace{\floatsep}
  \begin{tabular}{cc}
    \includegraphics[width=0.5\linewidth]{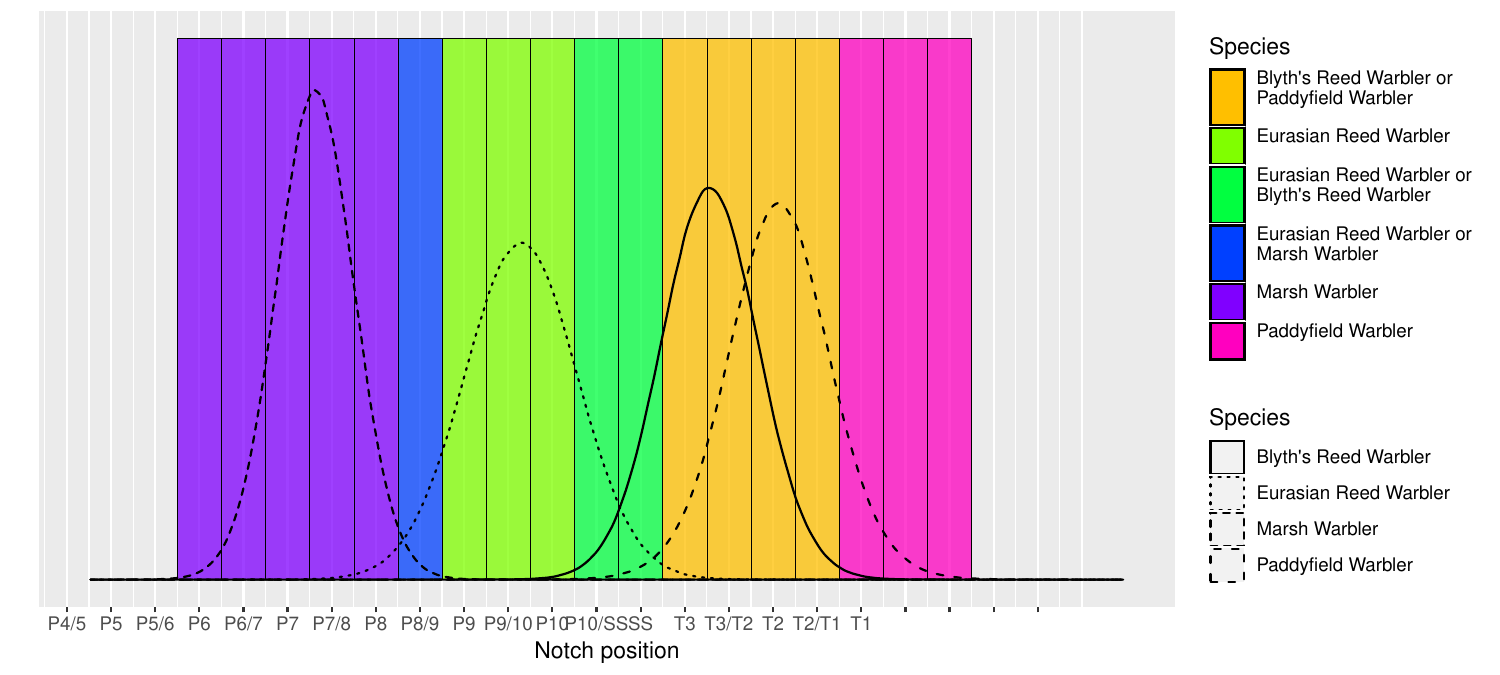} & 
    \includegraphics[width=0.5\linewidth]{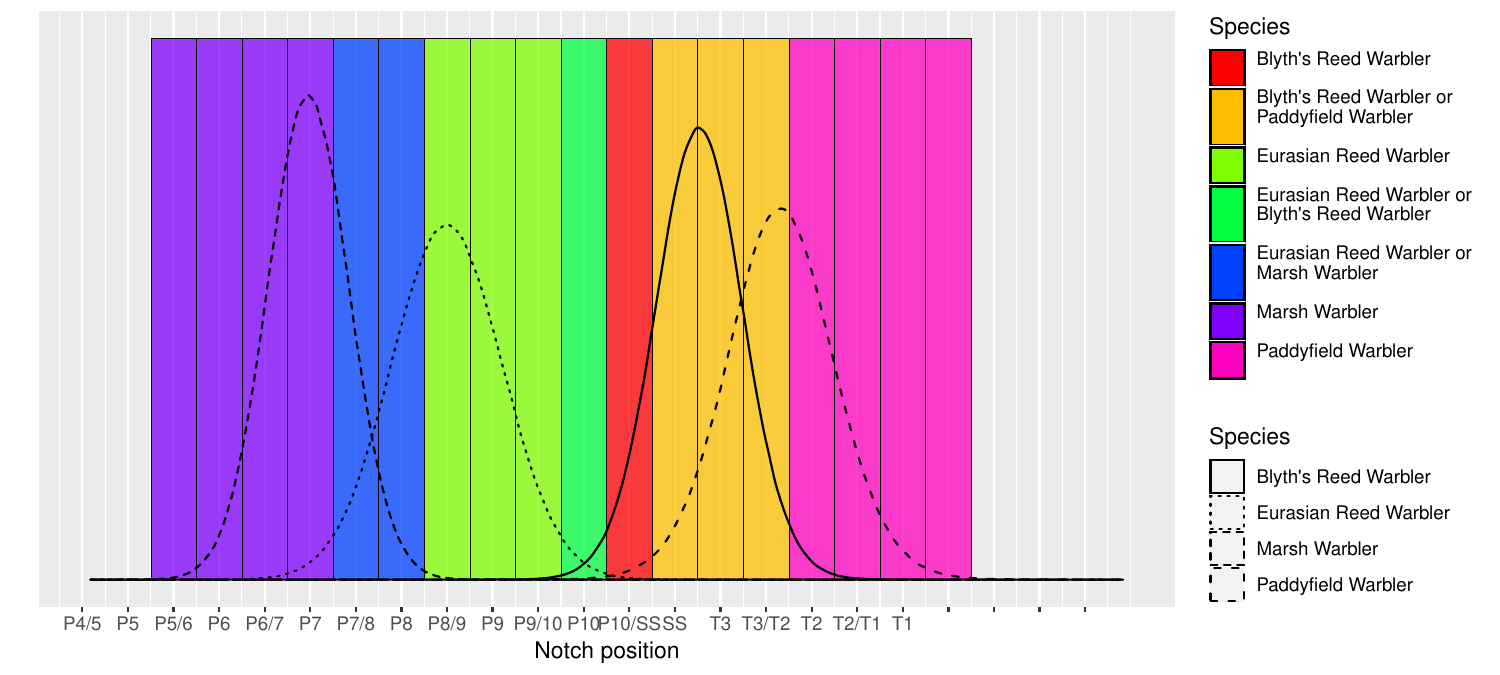} \\ [\abovecaptionskip]
     \small (e) Adult birds. & 
     \small (f) Juvenile birds.
  \end{tabular}
  \caption{In (a) and (b), decision regions are shown when only \textnormal{wing length} is observed; 
  in (c) and (d) decision regions are shown when only \textnormal{notch length} is 
  observed; and in (e) and (f) decision regions are shown when only 
  \textnormal{notch position} is observed. In all plots, kernel 
  density estimates of each aposteriori trait distribution for each 
  species is shown with black lines of different types. The plot highlights 
  the larger degree of separation in the traits \textnormal{wing length} 
  and \textnormal{notch position}.}
  \label{1d}
\end{figure}

\clearpage
\newpage


\newpage
\bibliography{refs}

\end{document}